\numberwithin{equation}{section}
\theoremstyle{plain}
\newtheorem{thm}{Theorem}[section]
\newtheorem{cor}[thm]{Corollary}
\newtheorem{prop}[thm]{Proposition}
\newtheorem{defn}[thm]{Definition}
\theoremstyle{definition}
\newtheorem{exm}[thm]{Example}
\newcommand{\es}{\operatorname{S}}
\newcommand{\esK}{\es_\mathbf{K}}
\newcommand{\te}{\operatorname{T}}
\newcommand{\teK}{\te_\mathbf{K}}
\newcommand{\teL}{\te_\mathbf{L}}
\newcommand{\id}{\operatorname{id}}
\newcommand{\pr}{\operatorname{pr}}
\newcommand{\p}{\partial}
\newcommand{\diffi}{\mathrm{D}_{n^i}}
\newcommand{\diffj}{\mathrm{D}_{n^j}}
\newcommand{\diffn}{\mathrm{D}_{n}}
\newcommand{\dbj}{\overline{\mathrm{D}}_{j}}
\newcommand{\dde}{D$\Delta$E\ }
\newcommand{\ddes}{D$\Delta$Es\ }
\newcommand{\ddee}{D$\Delta$E}
\newcommand{\ddese}{D$\Delta$Es}
\newcommand{\mbf}{\mathbf{f}}
\newcommand{\mbn}{\mathbf{n}}
\newcommand{\mbr}{\mathbf{r}}
\newcommand{\mbu}{\mathbf{u}}
\newcommand{\mbv}{\mathbf{v}}
\newcommand{\mbx}{\mathbf{x}}
\newcommand{\mbE}{\mathbf{E}}
\newcommand{\mbJ}{\mathbf{J}}
\newcommand{\mbK}{\mathbf{K}}
\newcommand{\mbL}{\mathbf{L}}
\newcommand{\mbQ}{\mathbf{Q}}
\newcommand{\mbzero}{\mathbf{0}}
\newcommand{\mbone}{\mathbf{1}}
\newcommand{\mblam}{\bm{\lambda}}
\newcommand{\mcA}{\mathcal{A}}
\newcommand{\mcC}{\mathcal{C}}
\newcommand{\mcD}{\mathcal{D}}
\newcommand{\mcL}{\mathcal{L}}
\newcommand{\mcO}{\mathcal{O}}
\newcommand{\mcT}{\mathcal{T}}
\newcommand{\msD}{\mathscr{D}}
\newcommand{\gtt}{\mathtt{g}}
\newcommand{\gttb}{\mathtt{g}}
\newcommand{\ua}{u^\alpha_{\mathbf{0};\mathbf{0}}}
\newcommand{\uaJ}{u^\alpha_{\mathbf{J};\mathbf{0}}}
\newcommand{\uaK}{u^\alpha_{\mathbf{0};\mathbf{K}}}
\newcommand{\uaJK}{u^\alpha_{\mathbf{J};\mathbf{K}}}
\title{Transformations, symmetries and Noether theorems for differential-difference equations}
\author{Linyu Peng{$^{1}$\footnote{Adjunct faculty member at the School of Mathematics and Statistics, Beijing Institute of Technology, Beijing 100081, China, and adjunct researcher at the Waseda Institute for Advanced Study, Waseda University, Tokyo 169-8050, Japan. Email: L.Peng@mech.keio.ac.jp } ~ and Peter E. Hydon{$^2$}\footnote{Email: P.E.Hydon@kent.ac.uk} }\vspace{0.4cm}
\\
{\it 1. Department of Mechanical Engineering, Keio University,}\\
{\it Yokohama 223-8522, Japan}\\
{\it 2. School of Mathematics, Statistics and Actuarial Science,}\\
{\it University of Kent, Canterbury CT2 7FS, UK}\\ }
\begin{document}
\maketitle

\abstract{The first part of this paper develops a geometric setting for differential-difference equations that resolves an open question about the extent to which continuous symmetries can depend on discrete independent variables.
For general mappings, differentiation and differencing fail to commute. We prove that there is no such failure for structure-preserving mappings, and identify a class of equations that allow greater freedom than is typical.
	
For variational symmetries, the above results lead to a simple proof of the differential-difference version of Noether's Theorem. We state and prove the differential-difference version of Noether's Second Theorem, together with a Noether-type theorem that spans the gap between the analogues of Noether's two theorems. These results are applied to various equations from physics.

}
\vspace{0.5cm}
{\bf Keywords:} Differential-difference equations, transformations, symmetries, conservation laws, Noether theorems

\section{Introduction}

In 1918, Emmy Noether's celebrated paper {\it `Invariante Variationsprobleme'} \cite{Noether1918} introduced generalized symmetries of variational problems and established two fundamental theorems. The first of these (known as Noether's Theorem) explains the connection between finite-dimensional Lie algebras of variational symmetry generators and conservation laws of the underlying Euler--Lagrange equations. In classical mechanics, for instance, Noether's Theorem links invariance under translations in time to conservation of energy, while invariance under rotations corresponds to conservation of angular momentum. If the Euler--Lagrange equations are of Kovalevskaya form, there is a bijection between (equivalence classes of) variational symmetries and conservation laws \cite{Olver1993}.

Noether's Second Theorem applies to the other extreme: the variational problem has gauge symmetries depending on arbitrary functions of all independent variables (and their derivatives) if and only if there exist differential relations between the Euler--Lagrange equations.
For a comprehensive history of Noether's two theorems and their generalizations  (up to the end of the twentieth century), see Kosmann-Schwarzbach \cite{Kosmann2011}.

In early 1980s, Maeda \cite{Maeda1981} extended Noether's Theorem to a simple class of ordinary difference equations. Kupershmidt \cite{Kuper1985} introduced the formal approach to difference variational principles, including the inverse problem of the calculus of variations. Noether's Theorem for finite difference equations on a computational mesh was proposed by Dorodnitsyn (e.g. \cite{Dorod2001,Dorod2010}); unlike the continuous case, variational symmetries are not necessarily symmetries of the underlying meshed difference Euler--Lagrange equations. Building on Kupershmidt's work, Hydon \& Mansfield \cite{Hydon2004} constructed the difference variational complex, leading to a general form of Noether's Theorem. For methods of constructing Noether's conservation laws for difference equations and finite element methods, see \cite{Hydon2014,Mansfield2006,Mansfield2017,MRHP2019}. Recently, Hydon \& Mansfield \cite{Hydon2011} derived Noether's Second Theorem for difference equations and bridged the gap between Noether's theorems (for differential and difference equations) by treating variational symmetries that depend on constrained functions.

Differential-difference equations (\ddes) can be used to model mechanical and other systems. They also arise as semi-discretizations of partial differential equations (PDEs) and conversely, as (partial) continuum limits of partial difference equations. There have been various approaches to adapting symmetry methods to \ddes (see \cite{Quispel1992,Levi2006,Levi1991,Yamilov2006,Peng2019}). In 2010, Levi \textit{et al.} \cite{Levi2010} used a limiting argument to show that the obvious adaptation of Lie point symmetry methods to \ddes needs modification when the transformation of a continuous independent variable, $x\in\mathbb{R}$, depends on a discrete independent variable, $n\in\mathbb{Z}$. However, further difficulties arise in this case, as the transformed difference and differential operators do not commute \cite{Peng2017}. This has raised a key question: are there any circumstances in which the transformation of $x$ can depend on $n$?
 
The current paper answers this question by establishing the geometric conditions for a mapping to be a well-defined point transformation of a \ddee. These lead immediately to constraints on Lie point symmetries (see Section \ref{sec:pf}) that were known previously only in special cases. We identify an exceptional class of \ddes whose symmetries appear to violate these constraints, and show how this phenomenon is consistent with the geometric framework for generic \ddese.
 
The second half of the paper addresses \ddes with variational symmetries. Section \ref{sec:NFT} deals with Noether's Theorem. Section \ref{sec:NST} proves Noether's Second Theorem for \ddes whose variational symmetries depend on arbitrary functions of all independent variables. A generalization to variational symmetries that depend on constrained functions is given in Section \ref{sec:NIT}.

\section{Transformations and symmetries of \ddese}
\label{sec:pf}

A solution of a differential or difference equation can be written (locally) as the graph of a function. The distinction between independent and dependent variables leads to a geometric structure (based on prolongation) that determines the class of transformations, a \textit{transformation} being a structure-preserving bijection whose inverse is also structure-preserving. A \textit{symmetry} of a given equation is a transformation that preserves the set of solutions of the equation. 

For \ddese, the corresponding structure is obtained by combining the differential and difference structures. For simplicity, we restrict attention to \ddes that are defined on $\mathbb{R}^{p}\times\mathbb{Z}^{m}$, with continuous independent variables $\mbx=(x^1,\dots,x^{p})$, discrete independent variables $\mbn=(n^1,\dots,n^{m})$, and dependent variables $\mbu=(u^1,\dots,u^q)\in\mathbb{R}^q$. (Domains other than $\mathbb{R}^{p}\times\mathbb{Z}^{m}$ can be dealt with by treating these variables as local coordinates.) The space $\mcT=\mathbb{R}^{p}\times\mathbb{Z}^{m}\times\mathbb{R}^q$ of all independent and dependent variables is called the \textit{total space}.

From here on, all functions are assumed to be locally smooth in each of their continuous arguments, for every $\mbn$. (This avoids the need to discuss technicalities associated with singularities and other discontinuities.) The Einstein summation convention is used throughout.

\subsection{Differential structure} 

For any $\mbn\in\mathbb{Z}^{m}$, the \textit{slice} $\mcT_\mbn=\mathbb{R}^{p}\times\{\mbn\}\times\mathbb{R}^q$ is a continuous space. Every function $\mbu=\mathbf{f}(\mbx,\mbn)$, restricted to this slice, can be prolonged by differentiation as many times as is needed. This gives rise to the infinite jet space, $J^{\infty}(\mcT_\mbn)$, whose `vertical' coordinates $\uaJ$ represent the values of the derivatives of the dependent variables. (The index after the semicolon is used later to indicate values of jet space variables on different slices, the slice at a given $\mbn$ being denoted by $\mbzero$.) Each component $j^i$ of the index $\mbJ=(j^1,\dots,j^{p})$ denotes the number of derivatives with respect to $x^i$. In particular, $\ua=u^{\alpha}$ and the first derivatives of $u^\alpha = f^\alpha(\mbx,\mbn)$ are represented by the coordinate values
\[
u^\alpha_{\mbone_i;\mbzero}=\frac{\partial f^\alpha(\mbx,\mbn)}{\partial x^i}\,,\qquad i=1,\dots,p,
\]
where $\mbone_i$ has $1$ in the $i$-th entry and zeros elsewhere. More generally, the action of the first derivative with respect to $x^i$ on any differentiable function defined on $J^{\infty}(\mcT_\mbn)$ is given by the operator
\[
D_i\big|_{J^{\infty}(\mcT_\mbn)}:=\frac{\partial}{\partial x^i}+u^{\alpha}_{\mbJ+\mbone_i;\mbzero}\,\frac{\partial}{\partial \uaJ}\,.
\]
(For further details on jet space, see \cite{Olver1995,KraVin1999}.)

So far, we have considered only the jet space over a single (arbitrary) slice. This is sufficient: a copy of the same jet space is generated from the slice over every $\mbn$, as $\mbx$ and $\mbn$ are mutually independent. Together, these constitute the \textit{total jet space} $J^{\infty}(\mcT)\cong \mathbb{Z}^{m}\times J^{\infty}(\mcT_\mbn)$.

\subsection{Difference structure}
 
The difference structure arises from the fact that the total space $\mcT$ is preserved by all translations
\[
 \teK:\mcT\rightarrow\mcT,\qquad\teK:(\mbx,\mbn,\mbu)\mapsto(\mbx,\mbn+\mbK,\mbu).
\]
Note that $\teL\circ\teK=\te_{\mbK+\mbL}$ for all $\mbK,\mbL\in\mathbb{Z}^{m}$.
 
The total space is disconnected, but has a representation as a connected space over each $\mbn$, as follows. Each slice is prolonged to include the values of the coordinates on all other slices as coordinates in a Cartesian product, using the pullback of each $u^\alpha$ with respect to each $\teK\in\mathbb{Z}^{m}$. The resulting prolongation space over $\mbn$ is denoted $P(\mcT_\mbn)$; it has vertical coordinates $\uaK$, where
 \[
 \uaK=\teK^*\left(\ua\big|_{\mbn+\mbK}\right).
 \]

It is straightforward to combine the differential and difference structures, as follows. The action of $\teK$ extends immediately to the total jet space:
\[
\teK:J^{\infty}(\mcT)\rightarrow J^{\infty}(\mcT),\qquad\teK:(\mbx,\mbn,\dots,\uaJ,\dots)\mapsto(\mbx,\mbn+\mbK,\dots,\uaJ,\dots).
\]
Similarly to the difference case, one can prolong the jet space over each $\mbn$ by pulling back the jet space coordinates over all $\mbn+\mbK$. This gives the space $P(J^{\infty}(\mcT_\mbn))$ whose vertical coordinates are
\[
\uaJK=\teK^*\left(\uaJ\big|_{\mbn+\mbK}\right);
\]
this is the connected component on $\mbn$ of the \textit{total prolongation space} $P(J^{\infty}(\mcT))\cong\mathbb{Z}^{m}\times P(J^{\infty}(\mcT_\mbn))$.

The composition rule for translations gives the identities
\[
 u^\alpha_{\mbJ;\mbK+\mbL}=\teK^*\left(u^\alpha_{\mbJ;\mbL}\big|_{\mbn+\mbK}\right).
\]
More generally, let $f$ be a (locally smooth) function on $P(J^{\infty}(\mcT))$ and denote its restriction to $P(J^{\infty}(\mcT_\mbn))$ by 
\[
f_\mbn(\mbx,\dots,u^\alpha_{\mbJ;\mbL},\dots):=f(\mbx,\mbn,\dots,u^\alpha_{\mbJ;\mbL},\dots).
\]
The pullback to $P(J^{\infty}(\mcT_\mbn))$ of $f_{\mbn+\mbK}(\mbx,\dots,u^\alpha_{\mbJ;\mbL},\dots)$ is the function
\[
\te_{\mbK}^*f_{\mbn+\mbK}=f(\mbx,\mbn+\mbK,\dots,u^{\alpha}_{\mbJ;\mbK+\mbL},\dots).
\]
Therefore, the action of the translation $\te_{\mbK}$ on the space of locally smooth functions is represented on $P(J^{\infty}(\mcT_\mbn))$ by the \textit{shift operator} $\es_{\mbK}$, defined by $\es_{\mbK}\!f_\mbn=\te_{\mbK}^*f_{\mbn+\mbK}$, which gives
 \begin{equation}
 \es_{\mbK}:f(\mbx,\mbn,\dots,u^\alpha_{\mbJ;\mbL},\dots)\mapsto f(\mbx,\mbn+\mbK,\dots,u^\alpha_{\mbJ;\mbK+\mbL},\dots).
 \end{equation}
Consequently, the derivative with respect to $x^i$ on $J^{\infty}(\mcT)$ is represented on $P(J^{\infty}(\mcT_\mbn))$ by the \textit{total derivative}
\begin{equation}
D_i=\frac{\partial}{\partial x^i}+u^{\alpha}_{\mbJ+\mbone_i;\mbK}\,\frac{\partial}{\partial \uaJK}.
\end{equation}
Crucially, all total derivatives and shift operators commute:
\begin{equation}
D_iD_j=D_jD_i,\qquad D_i\es_\mbK=\es_\mbK D_i,\qquad \es_\mbK\es_\mbL=\es_\mbL\es_\mbK.
\end{equation}
It is convenient to use the following shorthand notation for products of total derivatives:
\[
D_\mbJ=D_1^{j^1}\cdots D_p^{j^p}, \qquad \text{where}\ \mbJ=(j^1,\dots,j^p).
\]

Difference operators on the continuous space $P(J^{\infty}(\mcT_\mbn))$ arise from the ordering of each $n^i$. For any index $\mbK=(k^1,\dots,k^{m})$, the corresponding shift operator is
$\es_{\mbK}=\es_{1}^{k_1}\cdots \es_{m}^{k_{m}}$, where $\es_i:=\es_{\mathbf{1}_i}$ denotes the forward shift with respect to $n^i$. Then the forward difference in the $n^i$-direction is represented on $P(J^{\infty}(\mcT_\mbn))$ by the operator
\[
\diffi:=\es_i-\id,
\]
where $\id$ is the identity mapping. A \textit{difference divergence} is an expression $\mcC$ such that $\mcC=\diffi G^i$; similarly, a \textit{differential-difference divergence} is an expression $\mcC$ of the form
\begin{equation}\label{eq:div}
\mcC =D_iF^i+\diffi G^i.
\end{equation}

The \textit{formal adjoint} of a linear operator $\mcO$ is the unique operator $\mcO^{\dagger}$ such that $f\mcO g-(\mcO^{\dagger}f)g$ is a (differential-difference) divergence for all functions $f$ and $g$ defined on $P(J^{\infty}(\mcT_\mbn))$. In particular,
\[
D_i^{\dagger}=-D_i,\qquad \es_i^{\dagger}=\es_i^{-1},\qquad \id^{\dagger}=\id,\qquad \diffi^{\dagger}=-(\id^{\dagger}-\es_i^{\dagger})=-\es_i^{-1}\diffi\,;
\]
the composition rule $(\mcO_1\mcO_2)^{\dagger}=\mcO_2^{\dagger}\mcO_1^{\dagger}$ determines the adjoint of a product of linear operators. Thus
\[
\es_{\mbK}^{\dagger}=\es_{-\mbK}\,,\qquad D_\mbJ^{\dagger}=(-D)_\mbJ:=(-1)^{j^1+\cdots+j^p}D_\mbJ\,.
\]
The following useful result was proved in \cite{Peng2017}.
\begin{thm}\label{thm:ELker}
	A function on $P(J^{\infty}(\mcT_\mbn))$ is a differential-difference  divergence if and only if it belongs to the kernel of each Euler--Lagrange operator,
	\begin{equation}\label{eq:ELop}
		\mbE_{u^\alpha}=(-D)_\mbJ\es_{-\mbK}\frac{\p}{\p \uaJK}\,,\qquad \alpha=1,\dots,q.
	\end{equation}
	\end{thm}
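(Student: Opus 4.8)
The statement is an equivalence, so the plan is to prove the two implications separately, exploiting the fact that the Euler--Lagrange operator is precisely the device that carries out ``integration by parts to completion''. The forward implication (every divergence lies in the kernel) I would settle by direct computation, while the converse (kernel $\subseteq$ divergences) calls for a scaling homotopy. Throughout, all sums over $\mbJ,\mbK$ are effectively finite, since $\mcC$, $F^i$ and $G^i$ each depend on only finitely many coordinates.

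\emph{Divergence $\Rightarrow$ kernel.} Here I would verify $\mbE_{u^\alpha}(D_iF^i)=0$ and $\mbE_{u^\alpha}(\diffi G^i)=0$ separately. The essential ingredients are the two commutation rules
\begin{equation*}
\frac{\p}{\p\uaJK}\,D_i=D_i\,\frac{\p}{\p\uaJK}+\frac{\p}{\p u^\alpha_{\mbJ-\mbone_i;\mbK}},\qquad
\frac{\p}{\p\uaJK}\,\es_i=\es_i\,\frac{\p}{\p u^\alpha_{\mbJ;\mbK-\mbone_i}},
\end{equation*}
which follow from the explicit forms of $D_i$ and $\es_i$ (terms with a negative multi-index are read as zero). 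Substituting the first rule into $\mbE_{u^\alpha}(D_iF)=\sum_{\mbJ,\mbK}(-D)_\mbJ\es_{-\mbK}\,\p(D_iF)/\p\uaJK$ and using $(-D)_\mbJ D_i=-(-D)_{\mbJ+\mbone_i}$ together with $D_i\es_{-\mbK}=\es_{-\mbK}D_i$, the two resulting sums telescope and cancel after reindexing $\mbJ\mapsto\mbJ+\mbone_i$. Similarly, the second rule with $\es_{-\mbK}\es_i=\es_{-(\mbK-\mbone_i)}$ gives $\mbE_{u^\alpha}(\es_iG)=\mbE_{u^\alpha}(G)$, whence $\mbE_{u^\alpha}(\diffi G)=\mbE_{u^\alpha}(\es_iG-G)=0$. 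Summation over $i$ completes this direction.

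\emph{Kernel $\Rightarrow$ divergence.} For the converse I would use a dilation homotopy in $\mbu$. First I would establish the fundamental variational identity: writing $\uaJK=D_\mbJ\es_\mbK u^\alpha$ and applying the adjoint rule $f\mcO g-(\mcO^\dagger f)g\in\{\text{divergences}\}$ from the excerpt (with $\mcO=D_\mbJ\es_\mbK$, so $\mcO^\dagger=(-D)_\mbJ\es_{-\mbK}$) termwise to $\sum_{\mbJ,\mbK}\uaJK\,\p\mcC/\p\uaJK$ gives
\begin{equation*}
\sum_{\alpha,\mbJ,\mbK}\uaJK\,\frac{\p\mcC}{\p\uaJK}=u^\alpha\,\mbE_{u^\alpha}(\mcC)+D_i\widetilde F^i+\diffi\widetilde G^i .
\end{equation*}
Next, for $\lambda\in[0,1]$ let $\mcC([\lambda\mbu])$ denote $\mcC$ with every vertical coordinate $\uaJK$ replaced by $\lambda\uaJK$, so that $\tfrac{\upd}{\upd\lambda}\mcC([\lambda\mbu])=\sum_{\alpha,\mbJ,\mbK}\uaJK\,(\p\mcC/\p\uaJK)([\lambda\mbu])$. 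Applying the same integration by parts at fixed $\lambda$, and using the key observation that $D_i$ and each $\es_\mbK$ commute with the dilation $\uaJK\mapsto\lambda\uaJK$ (so that $[\mbE_{u^\alpha}(\mcC)]([\lambda\mbu])=0$ by hypothesis), I obtain $\tfrac{\upd}{\upd\lambda}\mcC([\lambda\mbu])=D_i\widehat F^i(\lambda)+\diffi\widehat G^i(\lambda)$ with coefficients smooth in $\lambda$. Integrating over $\lambda\in[0,1]$ and interchanging $\int_0^1$ with $D_i,\diffi$ yields $\mcC-\mcC([\mbzero])=D_iF^i+\diffi G^i$. Finally $\mcC([\mbzero])=g(\mbx,\mbn)$ depends on the independent variables alone, and since $p\ge1$ it equals $D_1\!\int_0^{x^1}\!g\,\upd s$, again a divergence; hence so is $\mcC$.

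The main obstacle will be the converse direction, and within it the commutation of $D_i$ with the dilation $\uaJK\mapsto\lambda\uaJK$: this is exactly what guarantees that the vanishing of $\mbE_{u^\alpha}(\mcC)$ propagates to every scaled copy $\mcC([\lambda\mbu])$, making the homotopy integrand a pure divergence for all $\lambda$. I would also need to check that the $u$-independent remainder $\mcC([\mbzero])$ is itself a divergence (this uses the presence of at least one continuous variable) and that interchanging the $\lambda$-integral with $D_i$ and $\diffi$ is legitimate, which holds because the homotopy coefficients depend smoothly on $\lambda$ and on only finitely many prolongation coordinates.
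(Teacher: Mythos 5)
Your proof is correct, but there is little in this paper to compare it against: the paper states Theorem \ref{thm:ELker} as a known result and defers the proof entirely to \cite{Peng2017}, where it follows from the exactness of a differential-difference variational complex established via homotopy operators. Your argument is essentially that standard proof in self-contained form. The forward direction (divergence $\Rightarrow$ kernel) via the commutation relations and telescoping is right: your two identities for $\p/\p\uaJK$ acting through $D_i$ and $\es_i$ are exactly what is needed, and they yield $\mbE_{u^\alpha}(D_iF^i)=0$ and $\mbE_{u^\alpha}(\diffi G^i)=0$ after the reindexings you describe. The converse via the vertical dilation homotopy is also sound, and you correctly isolate its crux: $D_i$ and each $\es_\mbK$ commute with the substitution $\uaJK\mapsto\lambda\uaJK$ (because the prolongation coordinates scale linearly and uniformly), so $\mbE_{u^\alpha}(\mcC)\equiv 0$ forces the homotopy integrand to be a divergence for every $\lambda$; the disposal of the remainder $\mcC([\mbzero])$ as a $D_1$-antiderivative, using $p\ge 1$, is also fine. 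The one point you should state explicitly is the standard domain caveat of all such homotopy arguments: the segment $\{[\lambda\mbu]:\lambda\in[0,1]\}$ must lie in the domain of $\mcC$, so the converse as you prove it holds on vertically star-shaped domains (equivalently, locally). This is not vacuous in this paper — Example \ref{exm:mve} uses the Lagrangian $L=v_{-1}v'-\ln(v_2-v)$, defined only where $v_2-v>0$ — although it is consistent with the paper's blanket assumption that all functions and statements are local.
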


\subsection{Lie point transformations for \ddes}

Having established the prolongation structure that underpins \ddese, we are now in a position to identify the constraints that must be satisfied by transformations of \ddese. The passive viewpoint is used throughout this section, in which a transformation is treated as a change of coordinates.

A \textit{point transformation} is a transformation of the total space,
\[
\Gamma:\mcT\rightarrow\mcT,\qquad\Gamma:(\mbx,\mbn,\mbu)\mapsto (\hat{\mbx},\hat{\mbn},\hat{\mbu}).
\]
Preservation of $\mathbb{Z}^{m}$ requires a \textit{lattice transformation}, $\hat{\mbn}=A\mbn + \mbn_0$, where $A\in GL_{m}(\mathbb{Z})$ and $\mbn_0$ is constant. (See \cite{Hydon2014} for an overview of lattice transformations of $\mathbb{Z}^m$.) For simplicity, we restrict attention to Lie groups of point transformations. Every one-parameter Lie group of mappings from $\mcT$ to itself can be expressed in the form
\begin{equation}\label{eq:totmap}
\hat{x}^i=x^i+\varepsilon\xi^i(\mbx,\mbn,\mbu)+O(\varepsilon^2),\qquad
\hat{n}^i=n^i,\qquad
\hat{u}^\alpha=u^\alpha+\varepsilon\phi^\alpha(\mbx,\mbn,\mbu)+O(\varepsilon^2).
\end{equation}
As such mappings change only $\mbx$ and $\mbu$, they are represented on $P(J^{\infty}(\mcT_\mbn))$ by using the same shift operators $\esK$ in both the original and transformed coordinates. 

\begin{prop}\label{prop:trans}
	A one-parameter Lie group of mappings \eqref{eq:totmap} prolongs to a transformation group for $P(J^{\infty}(\mcT_\mbn))$ if and only if each $\xi^i$ is independent of $\mbn$ and $\mbu$.
\end{prop}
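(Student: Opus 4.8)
The plan is to prolong the infinitesimal generator of \eqref{eq:totmap} to $P(J^{\infty}(\mcT_\mbn))$ and then to test it against the one piece of structure that the mapping is required to leave untouched: the shift operators $\esK$. Since \eqref{eq:totmap} fixes every $n^i$, the difference structure is represented by the \emph{same} operators $\esK$ before and after the transformation, so the prolonged mapping is a transformation of $P(J^{\infty}(\mcT_\mbn))$ precisely when it is compatible with this common difference structure; infinitesimally this reads $[\pr\mbv,\esK]=0$ for all $\mbK$, the contact (jet) part being automatically preserved by the standard prolongation on each slice. Here $\mbv=\xi^i\frac{\p}{\p x^i}+\phi^\alpha\frac{\p}{\p u^\alpha}$, and I would write its prolongation in evolutionary-plus-horizontal form, $\pr\mbv=\xi^iD_i+\pr\mbv_Q$, using the characteristic $Q^\alpha=\phi^\alpha-\xi^i u^\alpha_{\mbone_i;\mbzero}$ and the prolongation formula $\pr\mbv_Q(\uaJK)=D_\mbJ\esK Q^\alpha$.

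First I would record that the evolutionary part is automatically harmless: because $D_\mbJ\esK Q^\alpha$ is assembled from $Q^\alpha$ using only total derivatives and shifts, the commutativity relations $D_i\esK=\esK D_i$ and $\esK\esL=\esL\esK$ give $\pr\mbv_Q\,\esK=\esK\,\pr\mbv_Q$ on every generator, hence $[\pr\mbv_Q,\esK]=0$. The entire obstruction therefore resides in the horizontal term, and a short computation based on the operator identity $\esK\circ(\xi^iD_i)=(\esK\xi^i)\,D_i\esK$ yields
$[\pr\mbv,\esK]=[\xi^iD_i,\esK]=(\xi^i-\esK\xi^i)\,D_i\esK$.
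Thus the compatibility condition $[\pr\mbv,\esK]=0$ for all $\mbK$ is equivalent to $\esK\xi^i=\xi^i$ for all $\mbK$ and all $i$.

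The final step is to unpack $\esK\xi^i=\xi^i$. Written out, this says $\xi^i(\mbx,\mbn+\mbK,\uaK)=\xi^i(\mbx,\mbn,\ua)$ for every $\mbK$; since for $\mbK\neq\mbzero$ the coordinates $\uaK$ are independent of $\ua$, equating the two sides forces $\xi^i$ to be independent of $\mbu$, after which $\xi^i(\mbx,\mbn+\mbK)=\xi^i(\mbx,\mbn)$ for all $\mbK$ forces independence of $\mbn$. This gives the forward direction. Conversely, if each $\xi^i=\xi^i(\mbx)$ then $\esK\xi^i=\xi^i$, so $[\pr\mbv,\esK]=0$; moreover the transformed total derivatives $\hat D_i$ are then $\mbx$-dependent combinations of the $D_j$, so they still commute with every $\esK$ and with one another, and the flow preserves the full prolongation structure, yielding a genuine one-parameter transformation group of $P(J^{\infty}(\mcT_\mbn))$.

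I expect the main obstacle to be conceptual rather than computational: justifying that the correct structural requirement is $[\pr\mbv,\esK]=0$, and not merely the weaker demand that the prolonged coordinates $\uaJK$ be single-valued. The latter controls only $D_i\xi^j$ and would still permit a slice-dependent translation such as $\xi^i=\xi^i(\mbn)$; the sharper requirement reflects the geometric fact that $\esK$ relates slices at a \emph{common} value of $\mbx$, so any $\mbn$- or $\mbu$-dependence of $\xi^i$ makes ``the same $\mbx$'' map to different values of $\hat{\mbx}$ on different slices and thereby entangles the difference structure with $x$-differentiation. Making this precise is the crux: one verifies that the geometric prolongation gives $\delta\uaK=\esK\phi^\alpha+(\xi^i-\esK\xi^i)\,u^\alpha_{\mbone_i;\mbK}$, which collapses to the structure-preserving value $\esK\phi^\alpha$ (equivalently, $\hat u^\alpha_{\mbzero;\mbK}=\esK\hat u^\alpha$) exactly when $\esK\xi^i=\xi^i$.
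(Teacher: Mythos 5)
Your proof is correct, and it pivots on exactly the condition the paper isolates, $\esK\xi^i=\xi^i$ for all $\mbK$; the difference is that you derive it infinitesimally while the paper derives it at the group level. The paper expands the finite transformation to first order in $\varepsilon$, computes $\esK x^i-x^i$ in the hatted coordinates, and imposes $\esK\hat{x}^i=\hat{x}^i$ (the transformed continuous variables must be shift-invariant, just as the original ones are); the resulting condition $\xi^i(\hat{\mbx},\mbn+\mbK,\hat{\mbu}_{\mbzero;\mbK})=\xi^i(\hat{\mbx},\mbn,\hat{\mbu}_{\mbzero;\mbzero})$ is then unpacked exactly as you do --- independence of the coordinates $\hat{\mbu}_{\mbzero;\mbK}$ and $\hat{\mbu}_{\mbzero;\mbzero}$ kills the $\mbu$-dependence, after which the $\mbn$-dependence dies --- and the converse is argued the same way in both proofs ($\hat{\mbx}$ depends on $\mbx$ alone, so the transformed derivatives $\hat{D}_i$ are slice-independent and commute with every shift). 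What you do differently is to encode structure preservation as $[\pr\mbv,\esK]=0$ and to split $\pr\mbv=\xi^iD_i+\pr\mbv_\mbQ$: since the evolutionary part always commutes with shifts, the obstruction is confined to the horizontal term, and your identity $[\xi^iD_i,\esK]=(\xi^i-\esK\xi^i)D_i\esK$ is correct. This organization is clean, but it presupposes a prolongation formula of the type \eqref{eq:pgen} for general $\xi^i(\mbx,\mbn,\mbu)$, which is delicate: precisely when the structure fails to be preserved there is no single well-defined prolongation, because the differential and difference prolongations disagree, as the paper's example \eqref{eq:partu2} illustrates. You spotted this and closed the gap in your final paragraph --- the geometric computation $\delta\uaK=\esK\phi^\alpha+(\xi^i-\esK\xi^i)\,u^\alpha_{\mbone_i;\mbK}$ is correct and is the $\uaK$-counterpart of the paper's $\esK x^i-x^i$ calculation --- so in the end your argument and the paper's rest on the same first-order computation, packaged as a commutator criterion rather than as shift-invariance of $\hat{x}^i$.
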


\begin{proof}
For each $\mbK\in\mathbb{Z}^{m}\backslash\{\mbzero\}$, the mapping \eqref{eq:totmap} yields
\begin{align*}
\esK x^i - x^i =&\left\{\esK\hat{x}^i-\varepsilon\xi^i(\esK\hat{\mbx},\mbn+\mbK,\hat{\mbu}_{\mbzero;\mbK})\right\}-\left\{\hat{x}^i- \varepsilon\xi^i(\hat{\mbx},\mbn,\hat{\mbu}_{\mbzero;\mbzero})\right\}+O(\varepsilon^2)\\
=& \esK\hat{x}^i-\hat{x}^i+\varepsilon\left\{\xi^i(\hat{\mbx},\mbn,\hat{\mbu}_{\mbzero;\mbzero})-\xi^i(\hat{\mbx},\mbn+\mbK,\hat{\mbu}_{\mbzero;\mbK})\right\}+O(\varepsilon^2).
\end{align*}
For the mapping to preserve the structure of $P(J^{\infty}(\mcT_\mbn))$, it must satisfy $\esK\hat{x}^i=\hat{x}^i$ for all $\mbK$, just as $\esK x^i =x^i$. This gives the condition
\begin{equation}\label{eq:xii}
\xi^i(\hat{\mbx},\mbn+\mbK,\hat{\mbu}_{\mbzero;\mbK})=\xi^i(\hat{\mbx},\mbn,\hat{\mbu}_{\mbzero;\mbzero}),
\end{equation}
for all $\mbK$. The coordinates $\hat{\mbu}_{\mbzero;\mbK}$ and $\hat{\mbu}_{\mbzero;\mbzero}$ are distinct, so $\xi^i$ does not depend on $\mbu$. Therefore, from \eqref{eq:xii}, $\xi^i$ is also independent of $\mbn$.

Conversely, every mapping of total space,
\begin{equation}\label{eq:transt}
\hat{x}^i=x^i+\varepsilon\xi^i(\mbx)+O(\varepsilon^2),\qquad
\hat{n}^i=n^i,\qquad
\hat{u}^\alpha=u^\alpha+\varepsilon\phi^\alpha(\mbx,\mbn,\mbu)+O(\varepsilon^2),
\end{equation}
that belongs to a one-parameter Lie group has the same transformed total derivatives, denoted $\hat{D}_i$, on each jet space, because $\hat{\mbx}$ depends on $\mbx$ only (by the standard method for constructing each $\hat{x}^i$ from $\xi^i$). Therefore, each $\hat{D}_i$ commutes with every shift operator. So the prolongation of the (invertible) mapping \eqref{eq:transt} preserves the structure of $P(J^{\infty}(\mcT_\mbn))$, and is thus a transformation.
\end{proof}

The prolongation formula for the transformation group determined by \eqref{eq:transt} can be written most concisely in terms of the characteristic,
$\mbQ:=(Q^1,\dots,Q^q)$, where
\[
Q^\alpha=\phi^\alpha(\mbx,\mbn,\mbu)-\xi^i(\mbx)u^\alpha_{\mbone_i;\mbzero}\,.
\] 
The standard differential prolongation formula (see \cite{Olver1993}) applied to the jet space $J^{\infty}(\mcT_\mbn)$ gives
\[
\hat{u}^\alpha_{\mbJ;\mbzero}=\uaJ+\varepsilon\phi^\alpha_{\mbJ;\mbzero}+O(\varepsilon^2),\quad\text{where}\quad \phi^\alpha_{\mbJ;\mbzero}=D_\mbJ Q^\alpha +\xi^i(\mbx)u^\alpha_{\mbJ+\mbone_i;\mbzero}\,.
\]
This prolongs to $P(J^{\infty}(\mcT_\mbn))$ as follows:
\begin{equation}
\hat{u}^\alpha_{\mbJ;\mbK}=\uaJK+\varepsilon\phi^\alpha_{\mbJ;\mbK}+O(\varepsilon^2),\quad\text{where}\quad \phi^\alpha_{\mbJ;\mbK}=\esK\! D_\mbJ Q^\alpha +\xi^i(\mbx)u^\alpha_{\mbJ+\mbone_i;\mbK}\,.
\end{equation}
Then the infinitesimal generator of the one-parameter Lie group of point transformations \eqref{eq:transt}, that is, the vector field
\[
\mbv=\xi^i(\mbx)\frac{\p}{\p x^i}+\phi^\alpha(\mbx,\mbn,\mbu)\frac{\p}{\p u^\alpha}\,,
\]
prolongs to the following infinitesimal generator on $P(J^{\infty}(\mcT_\mbn))$:
\begin{equation}\label{eq:pgen}
\pr\mbv=\xi^i(\mbx)\frac{\p}{\p x^i}+\phi^\alpha_{\mbJ;\mbK}\frac{\p}{\p \uaJK}=\xi^i(\mbx)D_i+(\esK\! D_\mbJ Q^\alpha) \frac{\p}{\p \uaJK}\,.
\end{equation}

The characteristic $\mbQ$ has its origin in the transformation of graphs. On each slice $\mcT_\mbn$, the graph defined by $\mbu=\mbf(\mbn,\mbu)$ transforms as follows:
\begin{align*}
\hat{u}^\alpha=&f^\alpha(\mbx,\mbn)+\varepsilon\phi^\alpha(\mbx,\mbn,\mbf(\mbx,\mbn))+O(\varepsilon^2)\\
&=f^\alpha(\hat{\mbx},\mbn)+\varepsilon\left(\phi^\alpha(\hat{\mbx},\mbn,\mbf(\hat{\mbx},\mbn))-\xi^i(\hat{\mbx})\frac{\p f^\alpha(\hat{\mbx},\mbn)}{\p \hat{x}^i}\right)+O(\varepsilon^2).
\end{align*}
From the active viewpoint (obtained by dropping the carets), the transformation maps $\mbu=\mbf(\mbx,\mbn)$ to $\mbu=\mathbf{h}(\mbx,\mbn)$, where
\begin{equation}\label{eq:transg}
h^\alpha(\mbx,\mbn)=f^\alpha(\mbx,\mbn)+\varepsilon Q^\alpha\big|_{[\mbu=\mbf(\mbx,\mbn)]}+O(\varepsilon^2).
\end{equation}
(Here and henceforth, square brackets around an expression denotes the expression and a finite number of its prolongations, as needed.)
In the same way, every prolongation of $\mbu=\mbf(\mbx,\mbn)$ is mapped to the corresponding prolongation of \eqref{eq:transg}, giving rise to the terms $\esK\! D_\mbJ Q^\alpha$ at first order in $\varepsilon$. This action on graphs leads to the evolutionary representative of the transformation on $P(J^{\infty}(\mcT_\mbn))$, which is the prolongation of the mapping
\[
(\mbx,\mbn,\mbu)\longmapsto\left(\mbx,\mbn,\mbu+\varepsilon\mbQ+O(\varepsilon^2)\right).
\]
The infinitesimal generator for the evolutionary representative is
\begin{equation}\label{eq:Qgen}
\operatorname{pr}\mbv_\mbQ= (\esK\! D_\mbJ Q^\alpha) \frac{\p}{\p \uaJK}\,.
\end{equation}

\begin{figure}[t]
	\centering
	\includegraphics[scale=0.5]{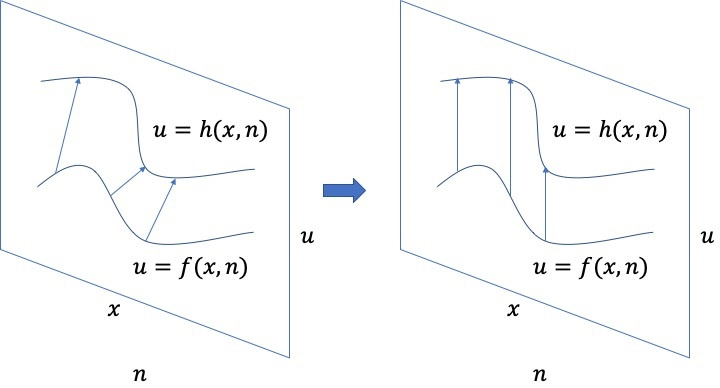}
	\caption{A local Lie point transformation (left) and its evolutionary representative (right).}
	\label{fig:evo}
\end{figure}

Figure 1 illustrates (for $p=m=q=1$) the distinction between the action on a graph of a point transformation that changes $\mbx$, and the corresponding action of its evolutionary representative. A graph on $\mcT$ is \textit{invariant} under the transformation if the restriction of $\mbQ$ to the prolonged graph is zero (on every slice). Equivalently, the graph prolonged to $P(J^{\infty}(\mcT_\mbn))$ is invariant if its restriction to every $\esK\!\mbQ$ is zero. If $\mbQ=0$, all graphs are invariant; the transformation is trivial (on graphs), merely moving points along each graph. From \eqref{eq:pgen}, every one-parameter Lie group of trivial transformations has a generator of the form $\xi^i(\mbx)D_i$. To summarize, $\operatorname{pr}\mbv_\mbQ$ determines the local action of the transformation group on graphs, whereas $\pr \mbv$ determines the action on points. For most purposes, including Noether's theorems, it is enough to know the action on graphs.

Any generator of the form \eqref{eq:Qgen}, where $\mbQ$ depends on $(\mbx,\mbn,[\mbu])$, is called a \textit{generalized transformation}. Noether introduced such transformations for differential equations (see \cite{Noether1918,Olver2018}); they are central to Noether's Theorem.

\subsection{Symmetries of \ddese}
Given a system of \ddese,
\begin{equation}\label{eq:ddesym}
\mcA:=\left(\mcA_{1}(\mbx,\mbn,[\mbu]),\dots,\mcA_\ell(\mbx,\mbn,[\mbu]) \right)=0,
\end{equation}
one can find Lie point and generalized symmetries by solving the {\bf linearized symmetry condition} (LSC),
\begin{equation}\label{LSClong}
\pr \mbv(\mcA_l)=0\ \text{ on all solutions of }  \mcA=0,\qquad l=1,\dots,\ell.
\end{equation}
Here and for the rest of the paper, we use $\mbv$ for all generators, whether or not they are in evolutionary form. Given the restriction to solutions (each of which defines a graph), any trivial component $\xi^i(\mbx)D_i$ contributes nothing to the LSC.

To calculate symmetries, it is necessary to recast the LSC as an equation. For most scalar \ddese, it is obvious how to do this. However, complications occur for some systems. To address these, we adapt the concept of a positive ranking from differential algebra to the differential-difference case.

\begin{defn}
A positive ranking of the variables $\uaJK$ is a total order $\prec$ satisfying (for each i):
\begin{enumerate}
	\item $\uaJK\prec D_i\uaJK\,(\,=u^\alpha_{\mbJ+\mbone_i;\mbK}),$
	\item $\uaJK\prec \es_i\!\uaJK\,(\,=u^\alpha_{\mbJ;\mbK+\mbone_i}),$
	\item $\uaJK\prec u^\beta_{\mathbf{I};\mathbf{L}} \Longrightarrow D_i\uaJK\prec D_iu^\beta_{\mathbf{I};\mathbf{L}}\,,$
	\item $\uaJK\prec u^\beta_{\mathbf{I};\mathbf{L}} \Longrightarrow \es_i\!\uaJK\prec \es_i\! u^\beta_{\mathbf{I};\mathbf{L}}\,.$
\end{enumerate}
\end{defn}

The \textit{leading variable} in $\mcA_l=0$ is its highest-ranked variable, which we denote by $U_{l}$. The equation is solved for that variable if $\mcA_l=U_{l}-\omega_l$ and the leading variable in $\omega_l$ is ranked lower than $U_{l}$. By the defining properties of a positive ranking, every prolongation of $\mcA_l=0$ has the corresponding prolongation of $U_{l}$ as its leading variable.

Suppose that every equation $\mcA_l=0$ is solved for its leading variable with respect to a given positive ranking, and that none of the $\ell$ leading variables coincide with (or are derivatives or shifts of) any other leading variable. Then $\omega=(\omega_1,\dots,\omega_\ell)$ and its prolongations can be substituted for $U=(U_1,\dots,U_\ell)$ and its prolongations in \eqref{LSClong}, turning the LSC into the following system of equations:
\begin{equation}\label{eq:LSC}
	\pr \mbv(\mcA_l)\Big|_{[U=\omega]}=0,\qquad l=1,\dots,\ell.
\end{equation}
For point symmetries, the form of $\mbQ$ is highly restricted, making \eqref{eq:LSC} an overdetermined linear system that can be solved by a combination of differential elimination and methods for linear equations. Generalized symmetries are found by the same approach, after restricting $\mbQ$ as needed.

There is considerable freedom to choose a positive ranking that is appropriate for a given system. Systems that are in generalized Kovalevskaya form (see \cite{Olver1993,Hydon2014,Peng2017}) have a positive ranking based on derivatives or shifts with respect to one independent variable. Systems with gauge symmetries typically do not have a preferred direction, so a positive ranking based on the overall differential and/or difference order should be used.

Symmetries have many uses. Generalized symmetries have been used to classify integrable \ddes (see \cite{Yamilov2006,Gari2017,Gari2018}). Both point and generalized symmetries can be used to derive group-invariant solutions, which are solutions satisfying the additional condition $\mbQ=\mbzero$.

\begin{exm}
	The nonlinear Schr\"{o}dinger (NLS) equation,
	\[
	\operatorname{i}\Psi_{t}+\Psi_{xx}+|\Psi|^2\Psi=0,
	\]
	has a well-known (non-integrable) spatial semi-discretization (with step size $h$) that amounts to the following system of \ddes for $u=\mathrm{Re}\{\Psi\},v=\mathrm{Im}\{\Psi\}$ as functions of $(t,n)$:
	\begin{equation}\label{eq:NLSap}
	\begin{split}
	u'&+h^{-2}\left(v_1-2v+v_{-1}\right)+\left(u^2+v^2\right)v=0,\\
	v'&-h^{-2}\left(u_1-2u+u_{-1}\right)-\left(u^2+v^2\right)u=0;
	\end{split}
	\end{equation}
	here $u'=D_tu$, the $j$-th forward shift of $u$ is denoted $u_j$, and similarly for $v$.
	This system is solved for its leading variables in any positive ranking that ranks derivatives higher than shifts. So the LSC can be solved; its Lie algebra of point symmetry generators is spanned by
	\[
	\mbv_1=\frac{\p}{\p t}\,,\qquad \mbv_2=v\frac{\p}{\p u}-u\frac{\p}{\p v}\,.
	\]
	Let us seek solutions that are invariant under the group generated by $X_1-\gamma X_2$, whose characteristic is
	\[
	\mbQ=(-\gamma v-u',\gamma u-v').
	\]
	The general solution of the invariance condition $\mbQ=\mbzero$ is
	\begin{equation}\label{eq:NLSinv}
	u=g(n)\cos(\gamma t+f(n)),\qquad v=g(n)\sin(\gamma t+f(n)),
	\end{equation}
	where $f$ and $g$ are arbitrary functions. Substituting \eqref{eq:NLSinv} into \eqref{eq:NLSap} gives a system of linear ordinary difference equations for $f$ and $g$. The solution process is messy and has several branches, each of which simplifies to the following family of invariant solutions:
	\[
	u=\left\{\gamma +\left(2h^{-1}\sin\alpha\right)^2\right\}^{1/2}\cos(\gamma t+2\alpha n+\beta),\qquad
	v=\left\{\gamma +\left(2h^{-1}\sin\alpha\right)^2\right\}^{1/2}\sin(\gamma t+2\alpha n+\beta).
	\]
	Here $\beta$ is an arbitrary constant, and $\alpha$ and $\gamma$ are constrained only by the requirement for $u$ and $v$ to be real-valued. When $h$ is sufficiently small, such solutions are a good approximation to the corresponding group-invariant solutions of the NLS equation (which have the same modulus), but for larger $h$, the phase error grows linearly with $n$.
\end{exm}

\subsection{Lie point symmetries of partitioned \ddes}
Proposition \ref{prop:trans} establishes that Lie point transformations preserve
$P(J^{\infty}(\mcT_\mbn))$ if and only if the transformation of the continuous independent variables does not depend on $\mbn$, either explicitly or implicitly through $\mbu$. For most \ddese, $P(J^{\infty}(\mcT_\mbn))$ is the appropriate prolongation space. However, there is an exceptional class of \ddes for which this is not so; these are related to partitioned difference equations (see \cite{Hydon2014}). Symmetries of such \ddes include transformations of $\mbx$ that are periodic in $\mbn$.
\begin{defn}
	A partitioned system of \ddes on $\mathbb{Z}^m$ is a system that can be defined on a periodic sublattice $\mcL_{\mbr}=(r_1\mathbb{Z})\times\cdots\times(r_m\mathbb{Z}),\ r_\mu\in\mathbb{N}$, where at least one $r_\mu$ is not $1$.
\end{defn}
From here on, we assume that the sublattice is aligned with the discrete coordinates $n_i$; this can be achieved by applying a lattice transformation (if necessary) at the outset. Then the $\mu$-th component of any translation between the points on $\mcL_{\mbr}$ is an integer multiple of $r_\mu$.

Without loss of generality, assume that each $r_\mu$ is maximal, so that the system of \ddes on $\mcL_{\mbr}$ is not partitioned. Then $\mathbb{Z}^m$ is covered by $r=r_1r_2\cdots r_m$ copies of $\mcL_{\mbr}$, and solutions of the system on any two copies are entirely unrelated. The prolongation space
$P(J^{\infty}(\mcT_\mbn))$ includes shifts that are not used in the copy of $\mcL_{\mbr}$ that contains $\mbn$, so a symmetry of the \dde does not need to be a transformation of $P(J^{\infty}(\mcT_\mbn))$. However, it must be a transformation of the reduced prolongation space $P_\mbr(J^{\infty}(\mcT_\mbn))$ that is obtained by including only shifts that are the pullback of translations between points in $\mcL_{\mbr}$; this is needed to preserve all prolongations of solutions of the \ddee.

\begin{prop}\label{prop:redtrans}
	The prolongation to $P(J^{\infty}(\mcT_\mbn))$ of the one-parameter Lie group of mappings \eqref{eq:totmap} is a transformation of the reduced prolongation space $P_\mbr(J^{\infty}(\mcT_\mbn))$ if and only if each $\xi^i$ is independent of $\mbu$ and satisfies the periodicity condition
	\begin{equation}\label{eq:percon}
	\xi^i\left(\mbx,\mbn+\sum_{\mu=1}^m k_\mu r_\mu \mbone_\mu\right)=\xi^i(\mbx,\mbn),\quad\emph{for all}\,\ k_\mu\in\mathbb{Z}.
	\end{equation}
\end{prop}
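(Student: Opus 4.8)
The plan is to follow the proof of Proposition \ref{prop:trans} almost verbatim, the only structural change being that the reduced space $P_\mbr(J^\infty(\mcT_\mbn))$ retains shift coordinates $u^\alpha_{\mbJ;\mbK}$ only for those $\mbK$ arising as translations within $\mcL_\mbr$, namely $\mbK=\sum_{\mu=1}^m k_\mu r_\mu\mbone_\mu$ with $k_\mu\in\mathbb{Z}$. Consequently the structure-preservation requirement $\esK\hat{x}^i=\hat{x}^i$ need be imposed only for such $\mbK$, rather than for all of $\mathbb{Z}^m$ as in Proposition \ref{prop:trans}; it is exactly this relaxation that opens up room for periodic dependence on $\mbn$.

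For necessity, I would reuse the first-order expansion of $\esK x^i-x^i$ computed in the proof of Proposition \ref{prop:trans}, but now only for admissible $\mbK\in\mcL_\mbr$. Imposing $\esK\hat{x}^i=\hat{x}^i$ for each such $\mbK$ yields the condition $\xi^i(\hat{\mbx},\mbn+\mbK,\hat{\mbu}_{\mbzero;\mbK})=\xi^i(\hat{\mbx},\mbn,\hat{\mbu}_{\mbzero;\mbzero})$. Since the coordinates $\hat{\mbu}_{\mbzero;\mbK}$ for distinct $\mbK\in\mcL_\mbr$ are independent coordinates of $P_\mbr(J^\infty(\mcT_\mbn))$, this identity forces $\xi^i$ to be independent of $\mbu$; what then remains is precisely the periodicity requirement \eqref{eq:percon}.

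For sufficiency, suppose $\xi^i$ is independent of $\mbu$ and satisfies \eqref{eq:percon}. The finite transformation $\hat{x}^i(\mbx,\mbn,\varepsilon)$ is recovered by integrating the flow $\mathrm{d}\hat{x}^i/\mathrm{d}\varepsilon=\xi^i(\hat{\mbx},\mbn)$ with $\hat{x}^i|_{\varepsilon=0}=x^i$. I would argue that, for any $\mbK\in\mcL_\mbr$, the curves $\varepsilon\mapsto\hat{x}^i(\mbx,\mbn+\mbK,\varepsilon)$ and $\varepsilon\mapsto\hat{x}^i(\mbx,\mbn,\varepsilon)$ satisfy the same autonomous system (by \eqref{eq:percon}) with identical initial data, so uniqueness of solutions gives $\esK\hat{x}^i=\hat{x}^i$. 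The transformed total derivatives $\hat{D}_i$ differ from the $D_i$ only through the Jacobian $\partial\hat{x}^j/\partial x^i$, which is itself $\mcL_\mbr$-periodic in $\mbn$; since each $\esK$ with $\mbK\in\mcL_\mbr$ commutes both with $D_i$ and with multiplication by such periodic functions, it commutes with every $\hat{D}_i$. Hence the prolonged, invertible mapping preserves the structure of $P_\mbr(J^\infty(\mcT_\mbn))$ and is a transformation.

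The main obstacle I anticipate lies in the sufficiency direction: one must pass from periodicity of the infinitesimal coefficient $\xi^i$ to periodicity of the full finite flow $\hat{x}^i$, and then confirm that this finite periodicity is enough to secure commutation of the transformed total derivatives with the restricted shifts. By contrast, the necessity direction amounts to restricting the earlier argument to the index set $\mcL_\mbr$, so the genuinely new content is verifying that the weaker condition $\esK\hat{x}^i=\hat{x}^i$ (for $\mbK\in\mcL_\mbr$ only) both follows from \eqref{eq:percon} and suffices for structure preservation on the reduced space.
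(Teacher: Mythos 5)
Your proposal is correct and follows essentially the same route as the paper's proof: repeat the argument of Proposition \ref{prop:trans} with the structure-preservation requirement $\esK\hat{x}^i=\hat{x}^i$ imposed only for shifts $\mbK=\sum_{\mu}k_\mu r_\mu\mbone_\mu$, which forces independence of $\mbu$ together with the periodicity condition \eqref{eq:percon}, and conversely. Your more detailed sufficiency step (ODE-uniqueness giving periodicity of the finite flow, and periodicity of the Jacobian giving commutation of $\hat{D}_i$ with the restricted shifts) just makes explicit what the paper compresses into the remark that, on $P_\mbr(J^{\infty}(\mcT_\mbn))$, $\xi^i$ effectively depends on $\mbx$ only.
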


\begin{proof}
	The proof follows the proof of Proposition \ref{prop:trans}, except that $\mbK$ in the condition \eqref{eq:xii} is replaced by
	\[
	\mbK_\mbr=\sum_{\mu=1}^m k_\mu r_\mu \mbone_\mu.
	\]
	(Note that $\xi^i$ can be changed by shifts $\esK$ on $P(J^{\infty}(\mcT_\mbn))$ for which $\mbK$ is not of the form $\mbK_\mbr$.)
	The periodicity condition ensures that, in effect, the restriction of $\xi^i$ to $P_\mbr(J^{\infty}(\mcT_\mbn))$ depends on $\mbx$ only. Consequently, the mappings generated by $\xi^i(\mbx,\mbn)D_i$ act trivially on graphs on $P_\mbr(J^{\infty}(\mcT_\mbn))$.
\end{proof}

\begin{cor}\label{cor:partsym}
	For a given partitioned system of \ddese, $\mcA=0$, every infinitesimal generator $\mbv$ of a one-parameter Lie group of point symmetries is of the form 
	\begin{equation}
	\mbv=\xi^i(\mbx,\mbn)\frac{\p}{\p x^i}+\phi^\alpha(\mbx,\mbn,\mbu)\frac{\p}{\p u^\alpha}\,,
	\end{equation}
	and satisfies the LSC \eqref{LSClong} subject to the periodicity condition \eqref{eq:percon}, with
	\begin{equation}
	\pr \mbv=\xi^i(\mbx,\mbn)\frac{\p}{\p x^i}+\phi^\alpha_{\mbJ;\mbK}\frac{\p}{\p \uaJK},\qquad \phi^\alpha_{\mbJ;\mbK}=\esK\! D_\mbJ\left\{\phi^\alpha(\mbx,\mbn,\mbu)-\xi^i(\mbx,\mbn)u^\alpha_{\mbone_i;\mbzero}\right\}+\xi^i(\mbx,\mbn)u^\alpha_{\mbJ+\mbone_i;\mbK}.
	\end{equation}
The group action on solutions is determined by the evolutionary representative \eqref{eq:Qgen}, where now the characteristic has components
\begin{equation}
Q^\alpha=\phi^\alpha(\mbx,\mbn,\mbu)-\xi^i(\mbx,\mbn)u^\alpha_{\mbone_i;\mbzero}\,.
\end{equation}
\end{cor}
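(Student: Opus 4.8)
The plan is to read the statement off as a direct consequence of Proposition \ref{prop:redtrans} combined with the prolongation machinery already assembled for the generic case, the only new feature being the permitted $\mbn$-dependence of each $\xi^i$. First I would invoke Proposition \ref{prop:redtrans}: a point symmetry of the partitioned system $\mcA=0$ must preserve all prolongations of solutions, and these are recorded on the reduced prolongation space $P_\mbr(J^{\infty}(\mcT_\mbn))$, so the symmetry must prolong to a transformation of $P_\mbr(J^{\infty}(\mcT_\mbn))$. That proposition then forces each $\xi^i$ to be independent of $\mbu$ and to satisfy the periodicity condition \eqref{eq:percon}, which immediately yields the stated form of the generator $\mbv=\xi^i(\mbx,\mbn)\,\p/\p x^i+\phi^\alpha(\mbx,\mbn,\mbu)\,\p/\p u^\alpha$.

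Second, I would establish the prolongation formula by repeating the derivation that led from Proposition \ref{prop:trans} to \eqref{eq:pgen}, tracking the effect of the $\mbn$-dependence. The key observation, already noted in the proof of Proposition \ref{prop:redtrans}, is that on $P_\mbr(J^{\infty}(\mcT_\mbn))$ the only admissible shifts are the $\es_{\mbK_\mbr}$ associated with translations between points of $\mcL_\mbr$; under these, periodicity gives $\es_{\mbK_\mbr}\xi^i=\xi^i$, so the restriction of each $\xi^i$ to the reduced space behaves exactly like a function of $\mbx$ alone. Consequently the transformed total derivative $\hat{D}_i$ commutes with every admissible shift, just as in the generic case, and the standard differential prolongation on a single slice extends across slices by pullback to give $\phi^\alpha_{\mbJ;\mbK}=\esK D_\mbJ Q^\alpha+\xi^i u^\alpha_{\mbJ+\mbone_i;\mbK}$ with $Q^\alpha=\phi^\alpha-\xi^i u^\alpha_{\mbone_i;\mbzero}$, which is precisely the claimed expression. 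The characteristic and the evolutionary representative \eqref{eq:Qgen} then follow from the same graph-transformation argument as in the generic derivation: the group action on $\mbu=\mbf(\mbx,\mbn)$ produces, at first order, the increment $Q^\alpha$ together with its prolongations $\esK D_\mbJ Q^\alpha$, and since $\xi^i(\mbx,\mbn)D_i$ acts trivially on graphs over $P_\mbr(J^{\infty}(\mcT_\mbn))$, the action on solutions is governed entirely by $\pr\mbv_\mbQ$. Reading the symmetry condition \eqref{eq:LSC} on the reduced space, subject to \eqref{eq:percon}, completes the statement.

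I expect the main obstacle to be bookkeeping the $\mbn$-dependence consistently. The formula writes $\esK$ acting on $Q^\alpha$, hence on $\xi^i(\mbx,\mbn)$ inside it, whereas the trailing term carries $\xi^i$ evaluated at the unshifted $\mbn$; one must verify, using periodicity, that these expressions are genuinely well defined on $P_\mbr(J^{\infty}(\mcT_\mbn))$ and not merely on the larger space $P(J^{\infty}(\mcT_\mbn))$. Equivalently, the crux is to confirm that allowing $\xi^i$ to depend on $\mbn$ does \emph{not} reintroduce the failure of $\hat{D}_i$ to commute with the admissible shifts — precisely the failure that Proposition \ref{prop:trans} rules out in the generic setting, and which periodicity restores here on the reduced space.
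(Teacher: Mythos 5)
Your proposal is correct and follows essentially the same route as the paper, which states this corollary without a separate proof precisely because it is an immediate consequence of Proposition \ref{prop:redtrans} combined with the prolongation and evolutionary-representative machinery already established for the generic case. Your key observation—that the periodicity condition \eqref{eq:percon} makes each $\xi^i$ behave like a function of $\mbx$ alone under the admissible shifts $\es_{\mbK_\mbr}$, so the transformed derivative commutes with them and $\xi^i(\mbx,\mbn)D_i$ acts trivially on graphs over $P_\mbr(J^{\infty}(\mcT_\mbn))$—is exactly the point made at the end of the paper's proof of Proposition \ref{prop:redtrans}.
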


\begin{exm}
Consider the \dde
\begin{equation}\label{eq:partu2}
u'=\frac{u_2}{u},
\end{equation}
where $u'=D_xu$, which is partitioned and has $r=2$. Every Lie point symmetry generator,
\[
\mbv=\xi(x,n)\frac{\p}{\p x}+\phi(x,n,u)\frac{\p}{\p u}\,,
\]
is subject to the periodicity condition $\xi(x,n+2k)=\xi(n,k),\ k\in\mathbb{Z}$. 
The LSC, subject to the periodicity condition, yields the Lie algebra of infinitesimal generators spanned by
\begin{equation}
\begin{aligned}
\mbv_1=\frac{\p}{\p x},&\quad \mbv_2=x\frac{\p}{\p x}+u\frac{\p}{\p u},\quad \mbv_3=2^{\left \lfloor{n/2}\right \rfloor }u\frac{\p}{\p u},\\
\mbv_4=(-1)^n\frac{\p}{\p x},\quad &\mbv_5=(-1)^n\left(x\frac{\p}{\p x}+u\frac{\p}{\p u}\right),
\quad\mbv_6=(-1)^n2^{\left \lfloor{n/2}\right \rfloor }u\frac{\p}{\p u}.
\end{aligned}
\end{equation}
Here $\left \lfloor{n/2} \right \rfloor $ is the greatest integer less than or equal to $n/2$. The prolonged Lie algebra is six-dimensional on $P(J^{\infty}(\mcT_n))$, but does not preserve the prolongation structure. To see this, it is helpful to consider the one-parameter Lie subgroup generated by $\mbv_5$, namely
\[
\Gamma:(x,n,u)\longmapsto(\hat{x},\hat{n},\hat{u})=\left(\exp\{(-1)^n\varepsilon\}x,\,n,\,\exp\{(-1)^n\varepsilon\}u \right),\qquad\varepsilon\in\mathbb{R}.
\]
On $P(J^{\infty}(\mcT_n))$, the prolongation of $\Gamma$ maps $u_1'$ to
\[
\widehat{u_1'}=\te_1^*(\widehat{u'}|_{n+1})=\te_1^*(u'|_{n+1}).
\]
However, this result is incompatible with the differential prolongation
\[
D_{\hat{x}}(\hat u_1)=D_{\hat{x}}(\exp\{(-1)^{n+1}\varepsilon\}u_1)=\frac{D_x(\exp\{(-1)^{n+1}\varepsilon\}u_1)}{D_x(\exp\{(-1)^{n}\varepsilon\}x)}=\exp\{2(-1)^{n+1}\varepsilon\}u_1',
\]
reflecting the fact that, unlike $D_x$, the transformed derivative $D_{\hat{x}}$ does not commute with the shift operator $\es$. A simpler way to see that the structure is not preserved is to note that $(\es-\id)x=0$, but $(\es-\id)\hat{x}\neq 0$, as in the proof of Proposition \ref{prop:trans}.

Indeed, the prolonged Lie algebra is only three-dimensional on the reduced space $P_2(J^{\infty}(\mcT_n))$, whose structure is preserved; it is isomorphic to the Lie algebra of point symmetry generators for the \dde $u'=u_1/u$.
\end{exm}

Although partitioned \ddes seem somewhat unnatural, their difference equation counterparts do occur in practice. For instance, the discrete potential KdV and cross-ratio equations are each embedded in a Toda-type partitioned Euler--Lagrange equation with $r=2$ (see \cite{Bobenko2002,Rasin07a}).

\section{Noether's Theorem}
\label{sec:NFT}
Noether's Theorem connects symmetries of variational problems and conservation laws of the underlying Euler--Lagrange equations. A restricted version of Noether's Theorem for differential-difference equations was proved in \cite{Peng2017}, with examples including the Volterra equation, the Toda lattice and some semi-discretizations of the KdV equation. We now state Noether's Theorem in full generality, using the evolutionary representative; this applies whether or not the system is partitioned (by Corollary \ref{cor:partsym}). Each $\phi^\alpha$ may depend on finitely many shifts and derivatives of $\mbu$, giving generalized symmetries whose characteristic is $\mbQ(\mbx,\mbn,[\mbu])$.

A \textit{conservation law} for a system of \ddese, $\mcA=0$, is a divergence,
\begin{equation}\label{eq:claw}
\mcC=D_iF^i(\mbx,\mbn,[\mbu])+\diffi G^i(\mbx,\mbn,[\mbu]),
\end{equation}
that is zero on all solutions of the system. A conservation law is \textit{trivial} if it can be written in the form \eqref{eq:claw} with all components $F^i$ and $G^i$ being zero when $[\mcA=0]$. Two conservation laws are \textit{equivalent} if their difference is trivial. Conservation laws are classified by finding a basis for the vector space of equivalence classes.

For a functional of the form 
\begin{equation}\label{eq:ddvp}
\mathscr{L}[u]=\sum_\mbn\int L(\mbx,\mbn,[\mbu])\operatorname{d}\!\mbx,
\end{equation}
where $L(\mbx,\mbn,[\mbu])$ is the differential-difference Lagrangian,
 the Euler--Lagrange equations obtained by variational calculus are $\mbE_{u^\alpha}(L)=0$. A generator $\mbv=Q^{\alpha}(\mbx,\mbn,[\mbu])\p/\p {u^{\alpha}}$ is a \textit{variational symmetry} if there exist functions $P_1^i(\mbx,\mbn,[\mbu])$ and $P_2^i(\mbx,\mbn,[\mbu])$ such that
\begin{equation}\label{eq:varsym}
\pr \mbv(L)=D_iP^i_1+\diffi P^i_2\,.
\end{equation}
From Theorem \ref{thm:ELker}, this condition amounts to
\begin{equation}\label{eq:varsym1}
	\mbE_{u^\alpha}(\pr \mbv(L))=0,\qquad \alpha=1,\dots,q,
\end{equation}
which can be used to calculate the variational symmetries whose characteristic $\mbQ$ has a given dependence on $(\mbx,\mbn,[\mbu])$.  

\begin{thm}[The differential-difference version of Noether's Theorem]
The generator $\mbv$ with characteristic $\mbQ(\mbx,\mbn,[\mbu])$ is a variational symmetry generator for the functional \eqref{eq:ddvp} if and only if $Q^\alpha\mbE_{u^\alpha}(L)$ is a conservation law for the Euler--Lagrange equations.
\end{thm}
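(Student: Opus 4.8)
The plan is to derive both directions from a single Noether identity that relates $\pr\mbv_\mbQ(L)$ to $Q^\alpha\mbE_{u^\alpha}(L)$ modulo a divergence, and then to close the argument using the characterization of divergences in Theorem \ref{thm:ELker}. The natural starting point is the evolutionary prolongation \eqref{eq:Qgen}, which gives
\[
\pr\mbv_\mbQ(L)=\sum_{\mbJ,\mbK}(\esK D_\mbJ Q^\alpha)\,\frac{\p L}{\p\uaJK},
\]
a finite sum because $L$ depends on only finitely many coordinates $\uaJK$. First I would integrate by parts in each term, transferring the operator $\esK D_\mbJ$ off $Q^\alpha$ and onto $\p L/\p\uaJK$. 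The adjoint relations collected just before Theorem \ref{thm:ELker} are exactly what is needed: for each $i$ one has the elementary transfers $f\,D_ig+(D_if)g=D_i(fg)$ and $f\,\es_ig-(\es_i^{-1}f)g=\diffi(\es_i^{-1}f\cdot g)$, so every single step produces a genuine component of a differential-difference divergence \eqref{eq:div}. Iterating over the multi-indices and using $(\esK D_\mbJ)^\dagger=(-D)_\mbJ\es_{-\mbK}$ yields
\[
\pr\mbv_\mbQ(L)=Q^\alpha\,(-D)_\mbJ\es_{-\mbK}\frac{\p L}{\p\uaJK}+\mcC_0=Q^\alpha\mbE_{u^\alpha}(L)+\mcC_0,
\]
where $\mcC_0$ is a divergence by construction and the middle expression is $Q^\alpha\mbE_{u^\alpha}(L)$ directly by the definition \eqref{eq:ELop} of the Euler--Lagrange operator.

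With this identity, both directions reduce to the observation that $\pr\mbv_\mbQ(L)$ and $Q^\alpha\mbE_{u^\alpha}(L)$ differ by the divergence $\mcC_0$, so one is a divergence if and only if the other is. For the forward implication, if $\mbv$ is a variational symmetry then $\pr\mbv_\mbQ(L)$ is a divergence by \eqref{eq:varsym}, whence $Q^\alpha\mbE_{u^\alpha}(L)=\pr\mbv_\mbQ(L)-\mcC_0$ is a difference of two divergences and hence a divergence; since it manifestly vanishes on every solution of $\mbE_{u^\alpha}(L)=0$, it is a conservation law in the sense of \eqref{eq:claw}. Conversely, if $Q^\alpha\mbE_{u^\alpha}(L)$ is a conservation law then it is in particular a divergence, so $\pr\mbv_\mbQ(L)=Q^\alpha\mbE_{u^\alpha}(L)+\mcC_0$ is a sum of two divergences and therefore a divergence, which is precisely condition \eqref{eq:varsym}. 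Note that the ``vanishing on solutions'' part of the conservation-law definition is automatic for $Q^\alpha\mbE_{u^\alpha}(L)$, so the entire content of the equivalence lies in the divergence structure.

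I expect the only substantive work to be the Noether identity itself, and within it the bookkeeping of the repeated integration by parts: each single transfer is routine given the adjoint formulas, but one must check that the accumulated remainder $\mcC_0$ really assembles into the form $D_iF^i+\diffi G^i$, with the differential transfers contributing $D_iF^i$ and the shift transfers contributing $\diffi G^i$. A cleaner route that sidesteps any explicit description of $\mcC_0$ is to apply each operator $\mbE_{u^\beta}$ to the identity: since $\mcC_0$ is a divergence, Theorem \ref{thm:ELker} gives $\mbE_{u^\beta}(\mcC_0)=0$, so $\mbE_{u^\beta}(\pr\mbv_\mbQ(L))=\mbE_{u^\beta}(Q^\alpha\mbE_{u^\alpha}(L))$. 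Theorem \ref{thm:ELker} then identifies ``is a divergence'' with ``lies in the kernel of every $\mbE_{u^\beta}$'' on both sides simultaneously, so the equivalence follows at once, with $\mcC_0$ needed only as a divergence and never computed explicitly.
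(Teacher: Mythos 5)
Your proposal is correct and follows essentially the same route as the paper: the same key identity $\pr\mbv_\mbQ(L)-Q^\alpha\mbE_{u^\alpha}(L)=\mcC_0$ with $\mcC_0$ a divergence (your explicit integrations by parts are just the verification of the adjoint relation $(\es_\mbK D_\mbJ)^\dagger=(-D)_\mbJ\es_{-\mbK}$ that the paper invokes directly), and then both implications read off from the fact that the two sides differ by a divergence, with vanishing on solutions automatic. The alternative closing via Theorem \ref{thm:ELker} is a minor cosmetic variant, not a different argument.
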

\begin{proof}
The key identity is 
\begin{align}\label{eq:prq}
\pr \mbv(L)-Q^{\alpha}\mbE_{u^\alpha}(L)=&\ (\es_\mbK\! D_\mbJ Q^\alpha)\frac{\p L}{\p \uaJK} - Q^\alpha(-D)_\mbJ\es_{-\mbK} \left(\frac{\p L}{\p \uaJK}\right).
\end{align}
By the definition of the adjoint, the right-hand side of \eqref{eq:prq}
is a divergence.
From this identity and \eqref{eq:varsym}, if $\mbv$ is a variational symmetry generator then $Q^{\alpha}\mbE_{u^\alpha}(L)$ is a divergence that is zero on all solutions of the Euler--Lagrange equations.

Conversely, if $Q^{\alpha}\mbE_{u^\alpha}(L)$ is a conservation law, it is a divergence and hence so is
\[
\mathbf{w}(L):=(\es_\mbK\! D_\mbJ Q^\alpha)\frac{\p L}{\p \uaJK}\,.
\]
The vector field $\mathbf{w}$ is the prolongation of $\mbv=Q^\alpha \p/\p u^\alpha$, and therefore $\mbv$ is a variational symmetry generator, whose characteristic is $\mbQ$.
\end{proof}

\begin{exm}[Modified Volterra equation]\label{exm:mve}
The modified Volterra equation (see \cite{Yamilov2006})
\begin{equation}
u'=u^2(u_1-u_{-1}),
\end{equation} 
can be written as an Euler--Lagrange equation (see \cite{Peng2017}) by writing $u=1/(v_1-v_{-1})$. The Lagrangian 
\begin{equation}\label{eq:MVoltlag}
L=v_{-1}v'-\ln(v_2-v)
\end{equation}
yields the Euler--Lagrange equation
\begin{equation}\label{eq:mveel}
\mbE_v(L):= v_1'-v_{-1}'+\frac{1}{v_2-v}-\frac{1}{v-v_{-2}}=0.
\end{equation}
From \eqref{eq:varsym1}, the most general characteristic of variational Lie point symmetries for \eqref{eq:MVoltlag} is
\begin{equation}\label{eq:vecq}
	Q=\gtt^1(x)+(-1)^n\gtt^2(x),
\end{equation}
where $\gtt^1(x)$ and $\gtt^2(x)$ are arbitrary functions.
In particular, when $\gtt^1$ and $\gtt^2$ are constant, Noether's Theorem yields two conservation laws of the modified Volterra equation:
\begin{equation*}
\begin{aligned}
\mbE_v(L)=& D_x\left(\frac{1}{u}\right)+\diffn (u+u_{-1}),\\
(-1)^n\mbE_v(L)=&D_x\left(\frac{(-1)^{n}}{u}\right)+\diffn \left((-1)^{n+1}(u-u_{-1})\right).
\end{aligned}
\end{equation*}
The remaining symmetries give nonlocal conservation laws of the modified Volterra equation. For instance, $Q=x$ gives
\[
x\mbE_v(L)=D_x\!\left(\frac{x}{u}\right)+\diffn\!\left(x(u+u_{-1})-\diffn^{-1}\!\left(\frac{1}{u}\right)\right);
\]
this involves the antidifference operator $\diffn^{-1}$, as do the other nonlocal conservation laws.
\end{exm}

\section{Noether's Second Theorem}
\label{sec:NST}
Some differential-difference variational problems \eqref{eq:ddvp} have variational gauge symmetries, whose characteristics are linear homogeneous in a set of arbitrary functions of all independent variables. The differential-difference analogue of Noether's Second Theorem applies to such symmetries.

To state and prove the theorem, some definitions are needed. A \textit{differential-difference operator} on $P(J^{\infty}(\mcT_\mbn))$ is an operator of the form
\[
\mcD=f^{\mbJ;\mbK}(\mbx,\mbn,[\mbu])D_\mbJ\esK\,,
\]
for some coefficient functions $f^{\mbJ;\mbK}$. A linear \textit{differential-difference relation} (or syzygy) between \ddes $\mcA_l=0$ is an identity of the following form (for some differential-difference operators $\mcD^l$):
\[
\mcD^l \mcA_l\equiv 0,
\]
where the coefficient functions are independent of $[\mcA]$. Without loss of generality, we restrict attention from here on to characteristics $\mbQ$ that are independent of $\mcA$, so that $\mbQ|_{[\mcA=0]}=\mbQ$.

\begin{thm}[Noether's Second Theorem for \ddese] \label{ddenst}
The functional \eqref{eq:ddvp} admits a symmetry generator whose characteristic, $\mbQ(\mbx,\mbn,[\mbu;\gttb])$, is linear homogeneous in $R$ independent arbitrary functions,
\begin{equation}\label{eq:gtt}
\gttb=\left(\gtt^1(\mbx,\mbn),\dots,\gtt^R(\mbx,\mbn)\right),	
\end{equation}
if and only if there are $R$ independent linear differential-difference relations between the Euler--Lagrange equations:
\begin{equation}\label{eq:ddr}
\mcD_r^{\alpha}\mbE_{u^\alpha}(L)\equiv 0,\quad r=1,2,\ldots,R.
\end{equation}
\end{thm}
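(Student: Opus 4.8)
The plan is to reduce everything to Noether's Theorem (proved above) together with the kernel characterisation of divergences (Theorem \ref{thm:ELker}), using the formal adjoint to transfer differential-difference operators off the arbitrary functions $\gtt^r$ and onto the Euler--Lagrange expressions. First I would make the hypothesis precise: since $\mbQ$ is linear homogeneous in $\gttb$, each component can be written as $Q^\alpha=\mathcal{R}^\alpha_r\,\gtt^r$ (summed over $r$) for some differential-difference operators $\mathcal{R}^\alpha_r=f_r^{\alpha;\mbJ;\mbK}(\mbx,\mbn,[\mbu])D_\mbJ\esK$ whose coefficients are free of $\gttb$. By Noether's Theorem, $\mbv$ is a variational symmetry for \emph{every} choice of the $\gtt^r$ if and only if $Q^\alpha\mbE_{u^\alpha}(L)=(\mathcal{R}^\alpha_r\gtt^r)\,\mbE_{u^\alpha}(L)$ is a differential-difference divergence for every choice of the $\gtt^r$.

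For the forward direction, I would invoke the defining property of the formal adjoint to rewrite
\[
(\mathcal{R}^\alpha_r\gtt^r)\,\mbE_{u^\alpha}(L)=\gtt^r\,(\mathcal{R}^\alpha_r)^{\dagger}\mbE_{u^\alpha}(L)+\text{divergence},
\]
so that $\gtt^r\,(\mathcal{R}^\alpha_r)^{\dagger}\mbE_{u^\alpha}(L)$ (summed over $r$ and $\alpha$) is itself a divergence for all $\gtt^r$. Treating the $\gtt^r$ as additional dependent variables and applying the Euler--Lagrange operator $\mbE_{\gtt^r}$, Theorem \ref{thm:ELker} forces this operator to annihilate the expression. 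Because each coefficient $(\mathcal{R}^\alpha_r)^{\dagger}\mbE_{u^\alpha}(L)$ is independent of $\gttb$ and each $\gtt^r$ now enters linearly and undifferentiated, $\mbE_{\gtt^r}$ simply returns that coefficient, yielding $(\mathcal{R}^\alpha_r)^{\dagger}\mbE_{u^\alpha}(L)=0$. Setting $\mcD_r^\alpha:=(\mathcal{R}^\alpha_r)^{\dagger}$ gives exactly the relations \eqref{eq:ddr}.

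The converse runs the same computation backwards. Given relations $\mcD^\alpha_r\mbE_{u^\alpha}(L)\equiv0$, I would define the candidate characteristic $Q^\alpha:=(\mcD^\alpha_r)^{\dagger}\gtt^r$, which is manifestly linear homogeneous in $\gttb$. Using the adjoint once more, together with $((\mcD^\alpha_r)^{\dagger})^{\dagger}=\mcD^\alpha_r$, the quantity $Q^\alpha\mbE_{u^\alpha}(L)$ equals $\gtt^r\,\mcD^\alpha_r\mbE_{u^\alpha}(L)$ plus a divergence; the first term vanishes by hypothesis, so $Q^\alpha\mbE_{u^\alpha}(L)$ is a divergence and Noether's Theorem delivers the desired variational symmetry. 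To close, I would match the counts: independence of the $R$ arbitrary functions corresponds to independence of the $R$ operator tuples $(\mcD^1_r,\dots,\mcD^q_r)$, so the two sides carry the same number $R$.

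The main obstacle I anticipate is justifying the middle step cleanly, namely that one may adjoin the arbitrary functions $\gtt^r$ to the list of dependent variables and apply Theorem \ref{thm:ELker} with respect to them. I would need to confirm that the kernel characterisation is insensitive to this enlargement (the $\gtt^r$ being genuine locally smooth functions of $(\mbx,\mbn)$), and to verify that $\mbE_{\gtt^r}$ acting on an expression linear and undifferentiated in $\gttb$ returns precisely the coefficient, with no spurious contributions from the $D_\mbJ\esK$ originally hidden inside $\mathcal{R}^\alpha_r$; this is exactly why those operators must first be moved onto $\gttb$ by the adjoint. The remaining bookkeeping, including the equivalence of ``divergence for all $\gtt^r$'' with a single identity and the precise notion of independence for the relations, should be routine once this point is settled.
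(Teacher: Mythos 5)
Your proposal is correct and follows essentially the same route as the paper: one direction uses the formal adjoint to pass from the relations $\mcD_r^\alpha\mbE_{u^\alpha}(L)\equiv 0$ to the characteristic $Q^\alpha=(\mcD_r^\alpha)^\dagger\gtt^r$ via Noether's Theorem, while the other applies $\mbE_{\gtt^r}$ (Theorem \ref{thm:ELker} with $\gtt$ adjoined as dependent variables, exactly as the paper does) to the divergence $Q^\alpha\mbE_{u^\alpha}(L)$. Your extra step of moving the operators onto $\gtt$ by the adjoint before applying $\mbE_{\gtt^r}$ is a cosmetic variation; it produces precisely the paper's relations $(-D)_{\mbJ}\es_{-\mbK}\bigl\{\tfrac{\partial Q^{\alpha}}{\partial \gtt^r_{\mbJ;\mbK}}\,\mbE_{u^\alpha}(L)\bigr\}\equiv 0$.
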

\begin{proof}
	
The proof is a simplified version of that used for the differential and difference cases in \cite{Hydon2011}. It is convenient to define an equivalence relation $\sim$ on the set of functions on $P(J^{\infty}(\mcT_\mbn))$ as follows:
\[
F_1\sim F_2\,\ \Longleftrightarrow\,\ F_1-F_2\ \text{is a divergence}.
\]
For instance, the condition for $\mbv$ to be a variational symmetry generator is $\pr\mbv(L)\sim 0.$

First, suppose that the Euler--Lagrange equations are subject to $R$ independent linear differential-difference relations \eqref{eq:ddr} and let $\gtt$ be an $R$-tuple of independent arbitrary functions \eqref{eq:gtt}. Then
\[
0\equiv\gtt^r\mcD_r^{\alpha}\mbE_{u^\alpha}(L)\sim \left\{(\mcD_r^{\alpha})^{\dagger}\gtt^r\right\} \mbE_{u^\alpha}(L)\sim \left\{D_{\mbJ}\es_{\mbK}(\mcD_r^{\alpha})^{\dagger}\gtt^r\right\}\frac{\partial L}{\partial u_{\mbJ;\mbK}^{\alpha}}=\pr\left((\mcD_r^{\alpha})^{\dagger}\gtt^r\frac{\p}{\p u^\alpha}\right)(L)\,,
\]
so each $\gtt$ yields a variational symmetry whose characteristic has components $Q^\alpha =(\mcD_r^{\alpha})^{\dagger}\gtt^r$.

Conversely, suppose that there exists a variational symmetry whose characteristic $\mbQ(\mbx,\mbn,[\mbu;\gttb])$ is linear homogeneous in $R$ independent arbitrary functions \eqref{eq:gtt}. By Noether's Theorem, $$Q^\alpha\mbE_{u^\alpha}(L)\sim 0.$$
Now treat $[\mbu]$ as subsidiary variables and vary each $\gtt^r$ in turn. Using Theorem \ref{thm:ELker} with $\gtt$ replacing $\mbu$, one obtains
\begin{equation}
\mbE_{\gtt^r}\!\left\{Q^{\alpha}(\mbx,\mbn,[\mbu;\gttb])\mbE_{u^\alpha}(L)\right\}\equiv 0,\qquad r=1,2,\ldots, R.
\end{equation}
These are the differential-difference relations among Euler--Lagrange equations; they amount to
\begin{equation}\label{eq:N2rel}
(-D)_{\mbJ}\es_{-\mbK}\left\{\frac{\partial Q^{\alpha}(\mbx,\mbn,[\mbu;\gttb])}{\partial \gtt^r_{\mbJ;\mbK}}\,\mbE_{u^\alpha}(L)\right\}\equiv 0,\qquad r=1,2,\ldots, R.
\end{equation}
As the functions $\gtt^r$ are independent, so are the relations \eqref{eq:N2rel} (which are independent of $\gtt$).
\end{proof}

\begin{cor}
Suppose that the functional \eqref{eq:ddvp} admits a symmetry generator whose characteristic, $\mbQ(\mbx,\mbn,[\mbu;\gttb])$, is linear homogeneous in $R$ independent arbitrary functions \eqref{eq:gtt}. Then the conservation laws $\mcC(\gttb)=Q^\alpha\mbE_{u^\alpha}(L)$ given by Noether's Theorem are trivial for all $\gtt$.	
\end{cor}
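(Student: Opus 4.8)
The plan is to retrace the forward direction of the proof of Theorem \ref{ddenst}, but to keep explicit track of the divergences that the relation $\sim$ silently discards, and then to verify that the fluxes of that divergence vanish on solutions of the Euler--Lagrange equations.

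First I would exploit the linearity of $\mbQ$ in $\gttb$ to write each component as $Q^\alpha=\mathcal{R}^\alpha_r\gtt^r$, where the differential-difference operators $\mathcal{R}^\alpha_r=\sum_{\mbJ,\mbK}\big(\p Q^\alpha/\p\gtt^r_{\mbJ;\mbK}\big)D_\mbJ\esK$ have coefficients depending on $(\mbx,\mbn,[\mbu])$. Using the definition of the formal adjoint to move each $\mathcal{R}^\alpha_r$ off $\gtt^r$ and onto $\mbE_{u^\alpha}(L)$ produces the identity
\[
Q^\alpha\mbE_{u^\alpha}(L)=\big(\mathcal{R}^\alpha_r\gtt^r\big)\mbE_{u^\alpha}(L)=\gtt^r\big(\mathcal{R}^\alpha_r\big)^{\dagger}\mbE_{u^\alpha}(L)+D_iA^i+\diffi B^i,
\]
in which $A^i$ and $B^i$ are the boundary fluxes generated by the summation/integration by parts. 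A short computation with the composition rule and $D_i^{\dagger}=-D_i$, $\esK^{\dagger}=\es_{-\mbK}$ shows that $\big(\mathcal{R}^\alpha_r\big)^{\dagger}$ is exactly the operator $\mcD^\alpha_r$ appearing in \eqref{eq:N2rel}; hence $\big(\mathcal{R}^\alpha_r\big)^{\dagger}\mbE_{u^\alpha}(L)\equiv 0$ by Theorem \ref{ddenst}, so the first term on the right vanishes identically, leaving $\mcC(\gttb)=D_iA^i+\diffi B^i$.

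The hard part will be showing that the fluxes $A^i$ and $B^i$ vanish on solutions. For this I would establish the structural fact that, whenever an operator is transferred from one factor onto another by parts, every term of the resulting boundary flux carries a factor that is a derivative or shift of the second factor. This is immediate on the generators, since $a\,D_ib=D_i(ab)-(D_ia)b$ and $a\,\es_ib-(\es_i^{-1}a)b=\diffi\big((\es_i^{-1}a)b\big)$, and it extends to an arbitrary $D_\mbJ\esK$ by induction using the composition rule $(\mcO_1\mcO_2)^{\dagger}=\mcO_2^{\dagger}\mcO_1^{\dagger}$. Applied with the second factor equal to $\mbE_{u^\alpha}(L)$, this guarantees that each term of $A^i$ and $B^i$ contains a factor from $[\mbE_{u^\alpha}(L)]$. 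Since the Euler--Lagrange equations and all their prolongations vanish on solutions, so do $A^i$ and $B^i$; thus $\mcC(\gttb)=D_iA^i+\diffi B^i$ is a trivial conservation law in the sense of Section \ref{sec:NFT}, for every choice of $\gttb$.
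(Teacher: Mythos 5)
Your proposal is correct and follows essentially the same route as the paper: both exploit linearity to write $\mcC(\gttb)$ in terms of the operators $\p Q^\alpha/\p\gtt^r_{\mbJ;\mbK}$, kill the adjoint term $\gtt^r(-D)_\mbJ\es_{-\mbK}\{(\p Q^\alpha/\p\gtt^r_{\mbJ;\mbK})\mbE_{u^\alpha}(L)\}$ using the relations \eqref{eq:N2rel} from Noether's Second Theorem, and recognize what remains as a divergence of the form $f\mcO g-(\mcO^{\dagger}f)g$ whose flux components vanish when $[\mbE_{u^\alpha}(L)=0]$. The only difference is presentational: the paper asserts the final vanishing-on-solutions property, whereas you verify it by tracking the by-parts fluxes explicitly, which is a worthwhile (and correct) elaboration rather than a new method.
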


\begin{proof}
	As $\mbQ$ is linear homogeneous in $\gttb$ and Noether's Second Theorem gives the differential-difference relations \eqref{eq:N2rel},
	\begin{align*}
		\mcC(\gttb)&=\gtt^r_{\mbJ;\mbK}\left\{\frac{\partial Q^{\alpha}(\mbx,\mbn,[\mbu;\gttb])}{\partial \gtt^r_{\mbJ;\mbK}}\,\mbE_{u^\alpha}(L)\right\}\\
		&=\gtt^r_{\mbJ;\mbK}\left\{\frac{\partial Q^{\alpha}(\mbx,\mbn,[\mbu;\gttb])}{\partial \gtt^r_{\mbJ;\mbK}}\,\mbE_{u^\alpha}(L)\right\}-\gtt^r(-D)_{\mbJ}\es_{-\mbK}\left\{\frac{\partial Q^{\alpha}(\mbx,\mbn,[\mbu;\gttb])}{\partial \gtt^r_{\mbJ;\mbK}}\,\mbE_{u^\alpha}(L)\right\}.
	\end{align*}
This is a divergence whose components are zero when $[\mbE_{u^\alpha}(L)=0],\ \alpha=1,\dots q$.
\end{proof}

\begin{exm}

The interaction of a scalar particle of mass $m$ and charge $e$ with an electromagnetic field is described by the Euler--Lagrange equations for the Lagrangian
\begin{equation}\label{eq:nstex1la}
	L=\frac{1}{4}F_{\mu\nu}F^{\mu\nu}+(\nabla_{\mu}\psi)(\nabla^{\mu}\psi)^{\ast}+m^2\psi\psi^{\ast},
\end{equation}
where $\nabla_{\mu}=D_{\mu}+\operatorname{i}eA_{\mu}$. Here flat space-time coordinates $(x^0,x^1,x^2,x^3)$ are used ($x^0$ being time), and indices are raised or lowered by the Minkowski metric $\eta=\operatorname{diag}\{-1,1,1,1\}$. The dependent variables are the complex-valued scalar wave function $\psi$, its conjugate $\psi^{\ast}$, and the real-valued electromagnetic $4$-potential with components $A^{\mu}$, from which one obtains the antisymmetric tensor
\begin{equation}\label{eq:field}
F_{\mu\nu}=D_{\mu}A_{\nu}-D_{\nu}A_{\mu}.
\end{equation}
Adapting a finite difference method due to Christiansen \& Halvorsen \cite{ChrisHal}, one can semi-discretize the Lagrangian \eqref{eq:nstex1la} in the spatial variables $x^j,\ j=1,2,3$, leaving $t=x^0$ as the only continuous variable. (This facilitates a Hamiltonian formulation of the \ddes and a method of lines numerical solution.) Set each $x^j$ to be $h^jn^j$, where $n^j\in\mathbb{Z}$, and replace each $D_j$ in \eqref{eq:field} by the scaled difference operator $\dbj=\diffj/h^j$, whose adjoint is $\dbj^{\dagger}=-\es_j^{-1}\dbj$. The operators $\nabla_\mu$ are replaced by
\[
\overline{\nabla}_0=D_t+\operatorname{i}eA_{0},\qquad \overline{\nabla}_j=\dbj+(h^j)^{-1}\left\{1-\exp(-\operatorname{i}eh^jA_j)\right\}\id.
\]
Then the Euler--Lagrange equations are
\begin{equation*}
\begin{aligned}
&0=\mbE_{\psi}(L)=(\overline{\nabla}_{\mu})^{\dagger}(\overline{\nabla}^{\mu}\psi)^{\ast}+m^2\psi^{\ast},\\
&0=\mbE_{\psi^{\ast}}(L)=(\overline{\nabla}_{\mu}^{\,\ast})^{\dagger}(\overline{\nabla}^{\mu}\psi)+m^2\psi,\\
&0=\mbE_{A^{0}}(L)=\operatorname{Re}\left\{2\operatorname{i}e\psi\left(\overline{\nabla}_{0}\psi\right)^{\ast}\right\}+\sum_{j=1}^3\dbj^{\,\dagger}F^{0j},\\
&0=\mbE_{A^{j}}(L)=\operatorname{Re}\left\{2\operatorname{i}e\exp(-\operatorname{i}eh^jA_j)\,\psi\left(\overline{\nabla}_j\psi\right)^{\ast}\right\}-D_tF^{0j}-\sum_{k=1}^3\overline{\mathrm{D}}_k^{\,\dagger}F^{kj}.
\end{aligned}
\end{equation*}
There is a family of characteristics depending on an arbitrary real-valued function $\gtt(t,n^1,n^2,n^3)$:
\begin{equation*}
Q^{\psi}=-\operatorname{i}e\psi\gtt,\quad Q^{\psi^{\ast}}=\operatorname{i}e\psi^{\ast}\gtt,\quad Q^{A^0}=-D_t\gtt,\quad Q^{A^{j}}=\dbj\gtt.
\end{equation*}
By Theorem \ref{ddenst}, the differential-difference identity among the Euler--Lagrange equations is
\begin{equation*}
-\operatorname{i}e\psi\mbE_{\psi}(L)+\operatorname{i}e\psi^{\ast}\mbE_{\psi^{\ast}}(L)+D_t \left(\mbE_{A^0}(L)\right)+\sum_{j=1}^3\dbj^{\,\dagger}\left(\mbE_{A^{j}}(L)\right)=0.
\end{equation*}
\end{exm}

\section{An intermediate Noether-type theorem}
\label{sec:NIT}
Noether's two theorems deal with two extremes. In Noether's Theorem, a variational symmetry generator is associated with a conservation law. In Noether's Second Theorem, a family of generators depending on entirely arbitrary functions of all $p+m$ independent variables is associated with differential-difference relations between the Euler--Lagrange equations. However, many systems of interest have variational symmetries that depend on arbitrary functions (of independent variables) that have fewer than $p+m$ arguments. For instance, the variational Lie point symmetries of the Volterra equation discussed earlier depend on two arbitrary functions of $x$ only. Such systems can be treated by adapting the proof of Noether's Second Theorem to incorporate differential-difference constraints.

From here on, we consider variational symmetries whose characteristic depends on $R$ functions, $$\gtt=\left(\gtt^1(\mbx,\mbn),\gtt^2(\mbx,\mbn),\ldots,\gtt^R(\mbx,\mbn)\right),$$that are subject to a complete set of differential-difference constraints
\begin{equation}\label{eq:constr}
\msD_r^i\gtt^r(\mbx,\mbn)=0,\qquad i=1,\dots,I.
\end{equation} 
Here each $\msD^i_r$ is a linear differential-difference operator whose coefficient functions depend only on $(\mbx,\mbn)$; the set of constraints is complete if there are no integrability conditions and $\gtt$ is arbitrary apart from the constraints.

\begin{thm}\label{thm:N1pt5}
	Suppose that the functional \eqref{eq:ddvp} admits a symmetry generator whose characteristic $\mbQ(\mbx,\mbn,[\mbu;\gttb])$ is linear homogeneous in $R$ independent functions, subject to the complete set of constraints \eqref{eq:constr}. Then there exists $\mblam=(\lambda_1(\mbx,\mbn,[\mbu]),\dots,\lambda_I(\mbx,\mbn,[\mbu]))$ such that
	\begin{equation}\label{eq:deter}
	(-D)_{\mbJ}\es_{-\mbK}\left\{\frac{\partial Q^{\alpha}(\mbx,\mbn,[\mbu;\gttb])}{\partial \gtt^r_{\mbJ;\mbK}}\,\mbE_{u^\alpha}(L)\right\}+\left(\msD^i_r\right)^{\!\dagger}\lambda_i=0,\qquad r=1,\dots,R.
	\end{equation}
	For any solution $\mblam$ of \eqref{eq:deter}, there is a corresponding family of conservation laws,
	\begin{equation}\label{eq:clcon}
	\mcC_{\mblam}(\gtt):=\lambda_i\msD^i_r\gtt^r-\gtt^r\big(\msD^i_r\big)^{\!\dagger}\lambda_i\,.
	\end{equation}
	For each set of linear differential-difference operators $\mcD^r$ such that $\mcD^r(\msD^i_r)^{\!\dagger}\lambda_i\equiv 0$, there is a corresponding differential-difference relation,
	\begin{equation}\label{eq:syzelim}
	\mcD^r(-D)_{\mbJ}\es_{-\mbK}\left\{\frac{\partial Q^{\alpha}(\mbx,\mbn,[\mbu;\gttb])}{\partial \gtt^r_{\mbJ;\mbK}}\,\mbE_{u^\alpha}(L)\right\}\equiv 0.
	\end{equation}	
\end{thm}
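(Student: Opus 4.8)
The plan is to follow the proof of Noether's Second Theorem (Theorem~\ref{ddenst}), but to replace the free variation of $\gttb$ by a \emph{constrained} variation handled through Lagrange multipliers. First I would invoke Noether's Theorem: since $\mbQ$ is the characteristic of a variational symmetry, $Q^\alpha\mbE_{u^\alpha}(L)\sim 0$. Because $\mbQ$ is linear homogeneous in $\gttb$, Euler's identity for homogeneous functions gives $Q^\alpha=\gtt^r_{\mbJ;\mbK}\,\p Q^\alpha/\p\gtt^r_{\mbJ;\mbK}$, and integrating the operators $D_{\mbJ}\esK$ off $\gtt^r$ by parts (using the adjoint) yields
\begin{equation*}
Q^\alpha\mbE_{u^\alpha}(L)\sim \gtt^r E_r,\qquad E_r:=(-D)_{\mbJ}\es_{-\mbK}\!\left\{\frac{\p Q^{\alpha}(\mbx,\mbn,[\mbu;\gttb])}{\p \gtt^r_{\mbJ;\mbK}}\,\mbE_{u^\alpha}(L)\right\}.
\end{equation*}
Hence $\gtt^r E_r\sim 0$ for every $\gttb$ satisfying the constraints \eqref{eq:constr}.

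The crux is to turn this \emph{constrained} divergence condition into the unconstrained determining equation \eqref{eq:deter}. Here I would use completeness of the constraint set: as \eqref{eq:constr} has no integrability conditions, the admissible variations (those with $\msD^i_r\gtt^r=0$) are rich enough for a Lagrange-multiplier argument to apply. Concretely, I claim there exist multipliers $\lambda_i(\mbx,\mbn,[\mbu])$ with
\begin{equation*}
\gtt^r E_r-\lambda_i\msD^i_r\gtt^r\sim 0
\end{equation*}
as an identity in \emph{unconstrained} $\gttb$. Granting this, I would treat $[\mbu]$ as subsidiary and apply the Euler--Lagrange operator $\mbE_{\gtt^r}$ (now legitimate, since the left-hand side is a divergence for all $\gttb$). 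Using $\mbE_{\gtt^r}(\gtt^s E_s)=E_r$ (as $E_s$ is independent of $\gttb$) and $\mbE_{\gtt^r}(\lambda_i\msD^i_s\gtt^s)=(\msD^i_r)^{\dagger}\lambda_i$ gives $E_r-(\msD^i_r)^{\dagger}\lambda_i=0$, which is \eqref{eq:deter} after the harmless sign change $\lambda_i\mapsto-\lambda_i$. This lifting step---from a divergence condition valid only on the constraint variety to an identity with multipliers---is the main obstacle, and is precisely where completeness is needed; it mirrors the corresponding argument in \cite{Hydon2011}.

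The remaining assertions follow quickly from \eqref{eq:deter}. For the conservation laws \eqref{eq:clcon}, I would note that $\mcC_{\mblam}(\gtt)=\lambda_i\msD^i_r\gtt^r-\gtt^r(\msD^i_r)^{\dagger}\lambda_i$ is a divergence by the defining property of the adjoint, and that it vanishes on solutions: the first term is identically zero by the constraint \eqref{eq:constr}, while \eqref{eq:deter} identifies $(\msD^i_r)^{\dagger}\lambda_i$ with $-E_r$, and $E_r$---a differential-difference operator applied to an expression proportional to $\mbE_{u^\alpha}(L)$---vanishes when $[\mbE_{u^\alpha}(L)=0]$. Finally, for the relations \eqref{eq:syzelim}, I would apply $\mcD^r$ to \eqref{eq:deter} and sum over $r$: since $\mcD^r(\msD^i_r)^{\dagger}\lambda_i\equiv 0$ by hypothesis, the multiplier terms drop out and $\mcD^r E_r\equiv 0$, which is exactly \eqref{eq:syzelim}. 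This eliminates the Lagrange multipliers and recovers genuine differential-difference relations among the Euler--Lagrange equations, interpolating between Noether's two theorems.
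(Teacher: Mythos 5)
Your proposal is correct and follows essentially the same route as the paper: invoke Noether's Theorem, introduce Lagrange multipliers for the constrained variation of $\gttb$ (the paper likewise asserts their existence from completeness, via $\mbE_{\gtt^r}\{Q^\alpha\mbE_{u^\alpha}(L)+\lambda_i\msD^i_s\gtt^s\}=0$, without further proof), derive \eqref{eq:deter} by applying $\mbE_{\gtt^r}$, then obtain the conservation laws from the adjoint identity plus the constraints and \eqref{eq:deter}, and the relations \eqref{eq:syzelim} by eliminating the multipliers with $\mcD^r$. Your explicit use of Euler's homogeneity identity to get $Q^\alpha\mbE_{u^\alpha}(L)\sim\gtt^r E_r$, and the sign flip $\lambda_i\mapsto-\lambda_i$, are only cosmetic differences from the paper's argument.
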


\begin{proof}
By Noether's Theorem, $Q^\alpha\mbE_{u^\alpha}(L)$ is a divergence for each $\gtt$ satisfying the constraints, so there exists an $I$-tuple of Lagrange multipliers $\mblam$ such that
\[
\mbE_{\gtt^r}\!\left\{Q^\alpha\mbE_{u^\alpha}(L)+ \lambda_i\msD_s^i\gtt^s\right\}=0;
\]
this amounts to \eqref{eq:deter}. By definition, $\mcC_{\mblam}(\gtt)$ is a divergence. On solutions of \eqref{eq:constr} and \eqref{eq:deter},
\[
\mcC_{\mblam}(\gtt)=g^r(-D)_{\mbJ}\es_{-\mbK}\left\{\frac{\partial Q^{\alpha}(\mbx,\mbn,[\mbu;\gttb])}{\partial \gtt^r_{\mbJ;\mbK}}\,\mbE_{u^\alpha}(L)\right\},
\]
which is zero on solutions of the Euler--Lagrange equations. So $\mcC_{\mblam}(\gtt)$ is a conservation law. Differential-difference relations \eqref{eq:syzelim} are obtained only if all $\lambda_i$ can be eliminated from the system \eqref{eq:deter}. It suffices to find a generating set of linear relations, from which all others can be obtained by applying linear differential-difference operators.
\end{proof}

\begin{exm} The characteristic of the variational Lie point symmetries of the modified Volterra equation (see Example \ref{exm:mve}) can be written as $Q=\gtt(x,n)$, subject to the constraint  $\gtt_1-\gtt_{-1}=0$. The determining equation \eqref{eq:deter} for the Lagrange multiplier $\lambda$ is
	\begin{equation*}
		\mbE_\gtt\!\left\{\gtt\mbE_u(L)+\lambda(\gtt_1-\gtt_{-1})\right\}:=v_1'-v_{-1}'+\frac{1}{v_2-v}-\frac{1}{v-v_{-2}}+\lambda_{-1}-\lambda_1=0.
	\end{equation*}
	An obvious solution is
	\begin{equation*}
		\lambda=v'+\frac{1}{v_1-v_{-1}}\,,
	\end{equation*}
	which leads to the conservation law
	\begin{equation*}
		\mcC_{\mblam}=\diffn\! \left\{\gtt\left(v_{-1}'+\frac{1}{v-v_{-2}}\right)+\gtt_{-1}\left(v'+\frac{1}{v_1-v_{-1}}\right)\right\}.
	\end{equation*}
As $\mcC_{\mblam}=Q\mbE_u(L)$ for all $\gtt$ satisfying the constraint, this is precisely the family of conservation laws given by Noether's Theorem. 
\end{exm}

\begin{exm}
The Euler--Lagrange equations for the Lagrangian
\[
L=(u_1-v-\tfrac{1}{2}w')w'+v(u_1-u)
\]
constitute the following linear system for $u(x,n),\, v(x,n)$ and $w(x,n)$:
\[ 
\mbE_u(L):=w_{-1}'+v_{-1}-v=0,\qquad \mbE_v(L):=-w'+u_1-u=0,\qquad \mbE_w(L):=-u_1'+v'+w''=0.
\]
This system has variational symmetries generated by
\[
\mbv=\gtt^1(x,n)\frac{\p}{\p u}+\gtt^2(x,n)\frac{\p}{\p v}+\gtt^3(x,n)\frac{\p}{\p w}\,,
\]
subject to the complete set of constraints
\[
D_x\gtt^3-\diffn\gtt^1=0,\qquad D_x(\gtt^2-\gtt^1)=0,\qquad \diffn(\gtt^2-\gtt^1)=0.
\]
Then system \eqref{eq:deter} determining the Lagrange multipliers for these constraints is
\begin{align*}
	0&=\mbE_u(L)+\es^{-1}\!\diffn\lambda_1+D_x\lambda_2+\es^{-1}\!\diffn\lambda_3\,,\\
	0&=\mbE_v(L)-D_x\lambda_2-\es^{-1}\!\diffn\lambda_3\,,\\
	0&=\mbE_w(L)-D_x\lambda_1.
\end{align*}
Eliminating all $\lambda_i$, we obtain a linear differential relation between the Euler--Lagrange equations that generates all other such relations:
\[
D_x(\mbE_u(L))+D_x(\mbE_v(L))+\es^{-1}\!\diffn(\mbE_w(L))\equiv 0.
\]
The determining system \eqref{eq:deter} has a solution,
\[
\lambda_1=-u_1+v+w',\qquad \lambda_2=-w,\qquad \lambda_3=u_1,
\]
which yields the family of conservation laws
\[
\mcC_{\mblam}(\gtt)=D_x\!\left\{(\gtt^1-\gtt^2)w+\gtt^3(-u_1+v+w')\right\}+\diffn\!\left\{-\gtt^1(v_{-1}+w_{-1}')+\gtt^2 u\right\}.
\]
The constraints can be solved explicitly in terms of an arbitrary function $f(x,n)$ and an arbitrary constant $c$:
\[
\gtt^1=f',\qquad \gtt^2=f'+c,\qquad \gtt^3=\diffn f.
\]
Then the family of conservation laws $\mcC_{\mblam}(\gtt)$ can be rewritten as
\[
\mcC_{\mblam}(\gtt)=D_x\!\left\{-cw+f(\mbE_u(L)+\mbE_v(L))\right\}+\diffn\!\left\{cu+f\es^{-1}\mbE_w(L)\right\}.
\]
Up to a trivial conservation law, every conservation law in this family is a multiple of the non-trivial conservation law $\mbE_v(L)=D_x(-w)+\diffn(u)$. (The family includes the obvious conservation laws $\mbE_u(L)$ and $\mbE_w(L)$.)
So the intermediate Noether-type theorem \ref{thm:N1pt5} reveals that this simple example has both one generating differential-difference relation and a one-dimensional equivalence class of conservation laws. Neither of these facts is immediately obvious from the Euler--Lagrange equations.
\end{exm}

\section{Conclusions}

By establishing the prolongation structure that must be preserved by a point transformation, we have shown that for Lie point symmetries of a generic \ddee, the transformed continuous independent variables cannot depend on the discrete independent variables. (By contrast, the transformed dependent variables can depend on all dependent and independent variables.) There is an exceptional class, partitioned \ddese, that allow such dependence, provided that it is compatibly periodic. These discoveries resolve the problem that differentiation and shifting do not commute if a mapping is not structure-preserving.

It can be shown that the approach taken by Levi \textit{et al}. in \cite{Levi2010} amounts to fixing an arbitrary point $\mbn$, then using the transformed derivative for that $\mbn$ to calculate prolongations of the variables $\mbu$ over all other discrete points, with a correction factor that accounts for the discrepancy between the discrete points. This approach has the advantage of allowing the transformed $\mbx$ to depend on $\mbn$ (at least, in principle). However, it turns out that the symmetries in all classes of equations investigated in \cite{Levi2010} do not exhibit such dependence, and the circumstances in which this dependence can occur are unknown. Moreover, the transformed derivatives in \cite{Levi2010} depend on which $\mbn$ is chosen, so that, unlike the current paper, there is no well-defined total prolongation space (which is foundational in both the differential and difference cases).

By factoring out trivial symmetries, we have proved differential-difference analogues of Noether's two theorems on variational symmetries, and have established an intermediate theorem that applies when the characteristic depends on functions that are subject to linear differential-difference constraints.

\section*{ Acknowledgments} 

 The authors are grateful to Centre International de Rencontres Math\'{e}matiques in Marseille for support and hospitality during the conference {\it `Symmetry and Computation'}, when work of this paper was initiated. The authors would like to thank the Isaac Newton Institute for Mathematical Sciences for support and hospitality during the programme {\it `Geometry, Compatibility and Structure Preservation in Computational Differential Equations'} when some work on this paper was undertaken. This work was partially supported by JSPS KAKENHI Grant Number JP20K14365, JST-CREST Grant Number JPMJCR1914, Keio Gijuku Academic Development Funds, Keio Gijuku Fukuzawa Memorial Fund, Waseda University Grant Program for Promotion of International Joint Research, and EPSRC Grant Number EP/R014604/1. We thank the referees for their constructive comments and suggestions.

\end{document}